\tikzset{
modal/.style={>=stealth',shorten >=1pt,shorten <=1pt,auto,node distance=1.5cm,
semithick},
world/.style={circle,draw,minimum size=0.5cm,fill=gray!15},
tworld/.style={circle,draw,white,minimum size=0.7cm},
arg/.style={circle,draw,minimum size=0.5cm},
point/.style={circle,draw,inner sep=0.5mm,fill=black},
reflexive above/.style={->,loop,looseness=7,in=120,out=60},
reflexive below/.style={->,loop,looseness=7,in=240,out=300},
reflexive left/.style={->,loop,looseness=4,in=150,out=210},
reflexive right/.style={->,loop,looseness=4,in=30,out=330}
}
\newcommand{\black}{\color{black}}
\newcommand{\hf}[2]{\hyperref[#1]{#2}}
\newcommand{\atm}{\mathsf{At}}
\newcommand{\bi}{ {\langle\, b\,\rangle}} 
\newcommand{\nbi}{ {[\, b\,]}}
\newcommand{\lanset}{\mathcal{L}}
\newcommand{\lb}{{\cal L}_{{\square [\,b\,]}}}
\newcommand{\basiclan}{\lanset_\square}
\newcommand{\fmod}{\mathfrak{M}}
\newcommand{\ffra}{\mathfrak{F}}
\newcommand{\cmod}{\mathcal{M}}
\newcommand{\cfra}{\mathcal{F}}
\newcommand{\fset}{ {\textsf{Fclass}} }
 \newtheorem{ex}{Example}
  \newtheorem{definition}{Definition}
  \newtheorem{theorem}{Theorem}
  \newtheorem{lemma}{Lemma}
  \newtheorem{proposition}{Proposition}
   \newtheorem{remark}{Remark}
  \newtheorem{convention}{Convention}
  \newtheorem{corollary}{Corollay}
 \numberwithin{equation}{section}
 \title{A meta-modal logic for bisimulations}
\author{Alfredo Burrieza \\ {\small burrieza@uma.es}\\ {\small Universidad de M\'alaga} \\  \\ Fernando Soler-Toscano \\ {\small fsoler@us.es} \\ {\small Universidad de Sevilla} \\ \\ Antonio Yuste-Ginel \\ {\small ayusteginel@uma.es} \\{\small Universidad de M\'alaga}}
\begin{document}

\maketitle

\begin{abstract}
    We propose a modal study of the notion of bisimulation. Our contribution is threefold. First, we extend the basic modal language with a new modality $\nbi$, whose intended meaning is universal quantification over all states that are bisimilar to the current one. We show that bisimulations are definable in this object language via frame correspondence. Second, we provide a sound and complete axiomatisation of the class of all pairs of Kripke models that are bisimulation-related. Third, we show that the satisfiability problem of our logic is decidable and PSPACE-complete via a translation to standard modal logic $K$ under a simple frame condition. All our results are encoded and verified by Isabelle/HOL.
\end{abstract}
\sloppy

\section{Introduction}

Modal logic can be understood in a narrow, classical sense as the logical study of alethic modalities: necessity, possibility, contingency and impossibility. In a broader sense, (propositional) modal logic is the extension of propositional logic with modal operators, which are operators that take other formulas as arguments. This also encompasses the study of how these extended languages are useful for describing certain mathematical structures \cite{blackburn2010modal} (see also \cite{chellas1980modal,hughes1996new,garson2013modal,van2010modal,zalta1995basic}). \par 

In the latter, broader sense, there is a long-standing tradition in which modal operators are interpreted over a chosen mathematical concept.\footnote{This can be contrasted with one promoting an intuitive or philosophical reading of modal operators, such as knowledge/belief \cite{hintikka1962knowledge}, time \cite{prior1957time}, or moral obligation \cite{von1951deontic}.} Then, such a concept is studied in a modal disguise, and its axiomatisation provides researchers with an informative, fresh look at the idea. Moreover, modalization of mathematical theories enables the importation of modal-logic tools beyond proof methods, including model-theoretic and computational machinery. This tradition can be dated back at least to the work of Tarski and McKinsey on the topological interpretation of the basic modal language \cite{mckinsey1944algebra}, which demonstrated that a rather intricate mathematical structure (topologies) admits a relatively simple and well-behaved normal modal logic (S4). \par 

A further step in this research direction is to provide a modal approach to notions that belong to the meta-mathematical theory of modal logic. This naturally results in what might be called \textit{meta-modal logic}: the construction of object modal languages to study concepts stemming from classical modal meta-theory. The first milestones in this enterprise have focused, to the best of our knowledge, on the modal approach of filtrations \cite{baltag2017quotient,ilin2018filtration,vansome}, a well-known method for obtaining the finite model property of modal logics and therefore a direct path to decidability. Here, we propose to focus on another modal logic concept: bisimulations.

\par 
Bisimulations are a fundamental formal tool in the model theory of standard modal logic.\footnote{See \cite[Chapter 2]{blackburn2010modal} and \cite{sangiorgi2009origins} for a systematic presentation as well as historical notes. Johan van Benthem was the pioneer of the notion, under the name of p-relations \cite{benthen1976thesis}.} Roughly speaking, bisimulations provided a clear answer to a foundational model-theoretical question: Given two (Kripke-style) models, what conditions are sufficient and necessary for them to satisfy the same modal formulas? Hennessy and Miller \cite{hennessy1985algebraic} showed that bisimulations are the exact answer whenever the models are image-finitary, where the number of accessible worlds for any given world is always finite. Among many other applications, bisimulations are used to yield definability and undefinability results, as well as tight correspondence results between first-order and modal logic. Additionally, they provide techniques for reducing the size of models, which in turn have a direct practical payoff.\par 

\par 

Our seminal contribution to the modal study of bisimulations is threefold. First, we show that the main definitional conditions of bisimulations between two arbitrary Kripke models, usually called  ``atomic harmony'',``forth'' (or ``zig'') and ``back'' (or ``zag''), can be modally defined with the sole aid of a normal modality (which we will denote by $\nbi$). Intuitively, $\nbi  A$ means that $ A$ holds at every state that is linked to the current state by a given bisimulation. Second, we provide a sound and complete axiomatisation for the class of all pairs of Kripke models that are linked by a bisimulation (``bi-models'', in our terminology). Completeness is proved via a canonical model technique, but it requires some creativity and deviation from the fully standard constructions. Third, we prove decidability and PSPACE-completeness of the satisfiability problem of the logic by translating it to the basic modal logic $K$ plus a simple frame axiom. \par 

Our methodology is also double-sided. On the one hand, we propose a novel application of standard arguments and methods in the literature on modal definability, completeness and decidability. On the other hand, our soundness, completeness and decidability proofs have been fully encoded and verified~\cite{Bisimulation_Logic-AFP} in Isabelle/HOL~\cite{nipkow2002isabelle}, enhancing the reproducibility and applicability of our approach. Indeed, Isabelle identified some flaws and unjustified steps in our original handwritten proofs, which compelled us to reconsider several key points of the completeness argument.
\par 

{The rest of the paper is structured as follows. Section \ref{sec:preliminaries} introduces the needed background on basic modal logic. Section \ref{sec:logic} presents the modal language for bisimulations, its semantics, some model-theoretical results, and an axiomatic proof system. The latter is shown to be sound and strongly complete with respect to the class of all bi-models in Section \ref{sec:completeness}. Decidability and complexity of the satisfiability problem are discussed in Section \ref{sec:decidability}. We conclude with a brief recap, some discussion and open challenges in Section \ref{sec:conclusion}.}

\section{Background}\label{sec:preliminaries}

We assume a denumerable set of atoms $\atm$ as fixed from now on. 

\begin{definition}[Basic modal language]
The language ${\cal L}_{\square}$ is the one generated by the following grammar:
\begin{center}
$A::=  p \mid \perp \mid (A \to A) \mid \square A $
\end{center}
\noindent where $p$ ranges over $\atm$. 
\end{definition}

We can introduce new defined operators. The rest of the Boolean connectives $\top,\lnot,\land,\lor,\leftrightarrow$ are defined as usual.
The dual of $\square$ is $\lozenge$, where $\lozenge A$ is shorthand for $\neg\square\neg A$.

 \begin{definition}[Basic frames, models and truth]\label{basica}
A {\em Kripke frame} for ${\cal L}_\square$ is an ordered pair of the form $\cfra=(W, R)$, where  $W\neq\varnothing$ and $R\subseteq W\times W$. 
We define $\fset = \{\cal F\mid \cal F \hbox{ is a Kripke frame}\}$. A frame is called image-finitary iff for every $w \in W$, $R[w]=\{v \in W \mid w R v\}$ is finite. \par 
A {\em Kripke model} {${\cal M}$} is a tuple of the form $(W, R, V)$, where $(W, R)$ is a Kripke frame and $V\colon \atm\longrightarrow 2^W$ is a \textit{valuation function} that assigns a set of worlds to each atom; intuitively, the set of worlds where the atom is true. The model $(W, R, V)$ is said to be {\em based on} the frame $(W,R)$. {If $w\in W$ (that is, $w$ is in ${\cal M}$), then $({\cal M}, w)$ is called a pointed model. {We tend to omit brackets when denoting pointed models.}}
\par 
We also define the relation $\models$, so that ${\cal M}, w\models A$ means `the formula $A$ is {\em satisfied at }(or {\em true at}) $w$  in the model ${\cal M}$'. Inductively:\par\medskip
 \begin{tabular}{l c l}
 ${\cal M}, w\models p$& iff & $w\in V(p)$ \quad (for any $p\in\atm$)\\[1.5mm]
   ${\cal M}, w\not\models \perp $ &\\[1.5mm]
  ${\cal M}, w\models A \to B$ & iff & either ${\cal M}, w\not\models A$ or ${\cal M}, w\models  B$ \\[1.5mm]
 ${\cal M}, w\models \square A$ & iff & for all $w'\in W$ s.t. $wRw'$,   ${\cal M}, w'\models   A$

\end{tabular}
\end{definition}

 The defined connective $\lozenge$ can be shown to ave the following meaning: \par \medskip
 \begin{tabular}{l c l}
     ${\cal M}, w\models \lozenge A$ & iff &  there exists $w'\in W$ s.t. $wRw'$ and ${\cal M}, w'\models  A$\\[1.5mm]
\end{tabular}

\begin{definition}[Bisimulation for $\lanset_{\square}$]\label{def:bisimulation} Let ${\cal M} = (W, R, V)$ and  ${\cal M'} = (W', R', V')$ be a pair of Kripke models. We say that a non-empty binary relation $Z\subseteq W\times W'$ is  a {\em bisimulation between}  ${\cal M}$ and  ${\cal M'}$ (in symbols: $Z \cal \colon M\,\underline{\leftrightarrow} \,M'$) if the following conditions hold:
\begin{enumerate}
\item[(1)]  If $wZw'$, then $w$ and $w'$ satisfies the same atoms {\em (atomic harmony)}.
\item[(2)]  If $wZw'$ and $wRv$, then there exists $v'$ in ${\cal M'}$ such that $vZv'$ and $w'R'v'$ \emph{(forth)}.
\item[(3)]   If $wZw'$ and $w'R'v'$, then there exists $v$ in ${\cal M}$ such that $vZv'$ and $wRv$ \emph{(back)}.
\end{enumerate}
If $w$ (in ${\cal M}$) and $w'$ (in ${\cal M'}$) are a pair of states linked by a bisimulation $Z$, then we say that $w$ and $w'$ are {\em bisimilar} (written $Z\colon {\cal M}, w\, \underline{\leftrightarrow}\, {\cal M'}, w'$ or if the context is clear, simply, $w\,\underline{\leftrightarrow}\, w'$). 
\end{definition}

The following well-known theorem from the literature on modal logic highlights the fundamental importance of bisimulations as a tool for logical inquiry. In words, modal information (true formulas) amounts to bisimilarity, given that the models being considered are image infinitary.
\begin{theorem}[\cite{hennessy1985algebraic}]\label{thm:henessy} Given two Kripke pointed models $({\cal M},w)$ and $(\mathcal{M'},w')$ which are image-finitary, then ${\cal M}, w\, \underline{\leftrightarrow}\, {\cal M'}, w'$ iff $({\cal M},w)$ and $(\mathcal{M'},w')$ satisfy the exact same $\lanset_{\square}$-formulas.
\end{theorem}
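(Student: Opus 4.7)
The plan is to prove the two directions separately, with the left-to-right direction being a routine induction and the right-to-left direction being the substantive content that requires the image-finitary hypothesis.

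For the direction from bisimilarity to modal equivalence, I would fix a bisimulation $Z$ with $wZw'$ and proceed by induction on the structure of $\lanset_\square$-formulas to show that $\cal M, x \models A$ iff $\cal M', x' \models A$ whenever $xZx'$. The atomic case follows from atomic harmony, $\bot$ is vacuous, and implication uses the induction hypothesis straightforwardly. For $\square A$, suppose $\cal M, x \models \square A$ and pick any $v'$ with $x'R'v'$; applying the back condition yields $v$ with $xRv$ and $vZv'$, so $\cal M, v \models A$ and by IH $\cal M', v' \models A$. The converse direction uses the forth condition symmetrically. Since modal equivalence at $w, w'$ follows from $wZw'$, this direction does not rely on image-finitarity.

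For the direction from modal equivalence to bisimilarity, I would define a candidate relation $Z \subseteq W \times W'$ by $xZx'$ iff $({\cal M}, x)$ and $({\cal M}', x')$ satisfy the same $\lanset_\square$-formulas, and verify that $Z$ is a bisimulation witnessing $\cal M, w \,\underline{\leftrightarrow}\, \cal M', w'$. Atomic harmony is immediate from the definition of $Z$. For forth, suppose $xZx'$ and $xRv$, and assume toward contradiction that no successor of $x'$ is modally equivalent to $v$. Using image-finitarity, enumerate $R'[x'] = \{v'_1, \dots, v'_n\}$; for each $i$ pick a formula $A_i$ satisfied by $v$ but not by $v'_i$, and form $A = \bigwedge_{i=1}^n A_i$. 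Then $\cal M, x \models \lozenge A$, so by modal equivalence $\cal M', x' \models \lozenge A$, yielding some $v'_j$ satisfying $A$ and hence $A_j$, a contradiction. The back condition is proved symmetrically, using image-finitarity of $\cal M$ to enumerate $R[x]$.

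The main obstacle, and the only place where image-finitarity is essential, is the construction of the finite witnessing conjunction $A$ in the forth/back arguments. Without this hypothesis one would need an infinite conjunction, which is not available in $\lanset_\square$, and indeed the theorem is known to fail in general without some such assumption (the modal saturation of the models is the standard weaker condition that suffices, but is beyond what is needed here). Once this finite-conjunction trick is in place, the rest of the argument is a clean verification that the modal-equivalence relation satisfies the three clauses of Definition \ref{def:bisimulation}.
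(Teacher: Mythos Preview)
Your proof is correct and is the standard textbook argument for the Hennessy--Milner theorem. However, note that the paper does not actually prove this theorem: it is stated in the background section with a citation to \cite{hennessy1985algebraic} and no proof is given, since it is a well-known result being quoted rather than established. So there is nothing in the paper to compare your argument against; what you have written is precisely the classical proof one finds in, e.g., \cite{blackburn2010modal}.
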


\section{A simple modal logic for bisimulations}\label{sec:logic}

\subsection{Syntax}

\begin{definition}\label{def:lb}
    
The language $\lb$ is the extension of $\lanset_{\square}$ generated by the following grammar:
\begin{center}
$A::=  p \mid \perp \mid (A \to A) \mid \square A \mid \nbi A $
\end{center}

\noindent where $p$ ranges over $\atm$. %

\end{definition}

 The meaning of $\nbi A$ is that ``$A$ is true in all states corresponding to the current one''.
The dual of $\nbi$ is $\bi$, so that $\bi A$  reads ``$A$ is true at least one corresponding state of the current state'', and it is defined $\neg\nbi\neg A$.
 The underlying idea is that  $\nbi$ (and its dual $\bi $) refer in the models only to the points of the domain of a {\em correspondence} relation (as a particular case, the correspondence relation may be a bisimulation).

\subsection{Semantics}

\begin{definition}[Frames for $\lb$]
A {\em frame} for  $\lb$ (or simply, a frame) is a tuple ${\mathfrak F} = ({\cal F}, {\cal F'}, Z)$ where:
\begin{itemize}
\item ${\cal F}, {\cal F'}\in \fset$, where ${\cal F} = (W, R)$ and ${\cal F}' = (W', R')$.
\item  $Z\subseteq W\times W'$ ($Z$ is a relation 
relation called \emph{correspondence relation}). 
\end{itemize} 
A frame where $Z\neq \emptyset$ and it satisfies properties (2) (forth) and (3) (back) of Def.\ \ref{def:bisimulation} is called a {\em bisimulation frame} (or simply a {\em bi-frame}). 
\end{definition}

\begin{definition}[Models for $\lb$]
A {\em model} for  $\lb$ (or simply, a model) is a tuple ${\mathfrak M} = ( {\mathfrak F}, {\cal M}, {\cal M'})$ where:
\begin{itemize}
\item ${\mathfrak F} = ({\cal F}, {\cal F'}, Z)$ is a frame for  $\lb$. \\

\item ${\cal M}$ is a Kripke model based on ${\cal F}$ and ${\cal M'} $ is a Kripke model  based on ${\cal F'}$.

\end{itemize} 
We say in this case that the model ${\mathfrak M}$ is based on ${\mathfrak F}$. \\

A model based on a bi-frame that moreover satisfies condition 1 of Def.\ \ref{def:bisimulation} is called a \emph{bi-model}.
\end{definition}

Given a frame ${\mathfrak F} = ({\cal F}, {\cal F'}, Z)$, in what follows we simplify   the model ${\mathfrak M} = ( {\mathfrak F}, {\cal M}, {\cal M'})$   by means of ${\mathfrak M} = ({\cal M}, {\cal M'}, Z)$.

\begin{definition}[Truth for $\lb$]\label{def:Truth} Let ${\mathfrak M} = ({\cal M}, {\cal M'}, Z)$ be a model for $\lb$. Let $w$ be in ${\cal M}$, and $w'$ be in ${\cal M'}$. Then for any formulas $A$ and $B$, we define the \emph{truth relation}, that relates triples of the form $({\mathfrak M}, {\cal M}^{(\prime)}, w^{(\prime)})$ and formulas,  as follows:
\par\medskip

 \begin{itemize}
 \item ${\mathfrak M}, {\cal M}, w \models p$ iff $w \in V(p)$ (for any $p \in \atm$).
  \item ${\mathfrak M}, {\cal M}', w' \models p$ iff $w \in V'(p)$ (for any $p \in \atm$).
 \item ${\mathfrak M}, {\cal M}, w \not \models \perp$.
  \item ${\mathfrak M}, {\cal M}', w' \not \models \perp$.
  \item ${\mathfrak M}, {\cal M}, w \models A \to B$ iff either ${\mathfrak M}, {\cal M}, w \not\models A $ or ${\mathfrak M}, {\cal M}, w \models B$.
\item ${\mathfrak M}, {\cal M}', w' \models A \to B$ iff either ${\mathfrak M}, {\cal M}', w' \not\models A $ or ${\mathfrak M}, {\cal M}', w'\models B$.

\item ${\mathfrak M}, {\cal M}, w \models\square  A$ iff for every $v \in W$, $wRv$ implies ${\mathfrak M},\cmod,  v \models A$.
\item ${\mathfrak M}, {\cal M'}, w' \models\square  A$ iff for every $v' \in W'$, $w'R'v'$ implies ${\mathfrak M},\cmod', v '\models A$.

\item  ${\mathfrak M}, {\cal M}, w \models  [\,b\,] A$  iff  for all  $w'\in W$  s.t. $wZw'$,   ${\mathfrak M}, \cmod',w'\models   A$.
\item  ${\mathfrak M}, {\cal M}', w' \models  [\,b\,] A$.

 \end{itemize}

When the context is clear, we simplify ${\mathfrak M}, {\cal M}^{(\prime)}, w^{(\prime)} \models A$ as ${\cal M}^{(\prime)}, w^{(\prime)}\models A$ or as ${\mathfrak M}, w^{(\prime)}\models A$ depending on convenience.
A formula $A$ is \emph{valid in a model} ${\mathfrak M}$ iff ${\mathfrak M}, w^\ast\models A$ for all $w^\ast \in W\cup W'$. A formula is \emph{valid in a frame} ${\mathfrak F}$ iff it is valid in every model ${\mathfrak M}$ based on ${\mathfrak F}$. {A formula $A$ is the (local) semantic consequence of a set $\Gamma \subseteq \lb$ (in symbols, $\Gamma \models A$) iff for every pointed model $(\fmod,\cmod^{(\prime)},w^{(\prime)})$, $\fmod,\cmod^{(\prime)},w^{(\prime)} \models B$ for all $B \in \Gamma$ implies $\fmod,\cmod^{(\prime)},w^{(\prime)} \models A$.}
   \end{definition}

\begin{remark}\label{remark:sharedWorlds} Note that the definition of frame does not assume that $W\cap W'=\emptyset$, for the sake of generality. However, even if $w \in W\cap W'$, this object is not considered equally from a modal point of view, since we might have a situation in which $\mathfrak{M},\mathcal{M},w \models \lnot \nbi \perp$, but it is always the case that $\mathfrak{M},\mathcal{M}',w \models \nbi \perp$ 
by the definition of truth. Hence, our definition implicitly assumes that the position of $\cmod^{(')}$ within $\fmod$ (left or right) is relevant when evaluating $\nbi$-formulas. A similar solution is adopted explicitly in our Isabelle encoding.
\end{remark}

\par\medskip
The clauses of satisfaction for the defined connective $ \bi$ are {in consequence}:
\begin{itemize}

  \item   ${\mathfrak M},{\cal M}, w\models \bi A$ iff there exists $w'\in W'$ such that $wZw'$ and ${\mathfrak M}',{\cal M'},  w'\models  A$.  
   \item  ${\mathfrak M}, {\cal M}', w'\not\models \bi A$ .  
\end{itemize}

\begin{ex}[A bi-model] 

The figure below represents a bi-model, $\fmod_1$, based on a bi-frame, $\ffra_1,$ where $R$ and $R'$ are represented through normal arrows and $Z$ is represented through dashed arrows. We exemplify the notion of truth and validity by stating some facts:

\begin{itemize}
    \item $\fmod_1,\cmod,w\models \square \bi \lozenge p$. 
        \item $\fmod_1,\cmod',w'\models \lnot\square \bi \lozenge p$. 
        \item $\fmod_1,\cmod,w\models \lozenge \nbi \square (p\lor q)$.
                        \item $\fmod_1,u \models \bi\top\to \square\bi\top$.
        \item $\fmod_1 \models p \to \nbi p$.
        \item $\ffra_1 \models \nbi\bot \lor\square\bi\nbi\bot $.

\end{itemize}

\begin{center}
\begin{tikzpicture}[modal,world/.append style=
{minimum size=1cm}]

\node[world] (w) [label=below:$w$] {$p$};
\node[world] (v) [below=1.5cm of w,label=below:$v$] {$q$};
\node (pos) [below=0.5cm of w] {};
\node[world] (u) [right=0.5cm of pos,label=below:$u$] {$p$};

\node (posl) [left=0.5cm of w] {};
\node (posb) [below=0.1cm of v] {};

\node[draw=black, fit=(posl) (w) (v) (u) (posb),label=below:$\cmod$](mod1) {};

\draw[->] (w) edge[reflexive left] (w);
\draw[->] (u) edge[reflexive left] (u);
\draw[->] (v) edge[reflexive left] (v);
\draw[<->] (w) edge[bend right] (v);
\draw[<->] (w) edge (u);
\draw[<->] (u) edge (v);


\node[world] (wp) [right=4cm of w,label=below:$w'$] {$p$};
\node (posr) [right=0.5cm of wp] {};

\node[world] (vp) [below=1.5cm of wp,label=below:$v'$] {$q$};
\node (posb) [below=0.1cm of vp] {};
\node (posl) [left=0.1cm of vp] {};

\draw[->] (wp) edge[reflexive right] (wp);
\draw[->] (vp) edge[reflexive right] (vp);
\draw[<->] (wp) edge[bend left] (vp);

\node[draw=black, fit=(wp) (vp) (posb) (posr) (posl),label=below:$\cmod'$](mod2) {};


\node (pos1) [below=1cm of v] {};
\node[draw=black, fit=(mod1) (mod2) (pos1),label=below:$\fmod_1$](fmod) {};
\draw[->] (w) edge[dashed] (wp);
\draw[->] (u) edge[dashed] (wp);
\draw[->] (v) edge[dashed] (vp);

\end{tikzpicture}
\end{center}
\end{ex}
\black

\subsection{Expressivity}

The first logical question we might face is whether we gained anything at the level of expressivity by adding $\nbi$ to the basic modal language. The answer, as expected, is positive. 

\begin{definition}[Relative expressivity]
    Given two languages $\lanset$ and $\lanset'$ that are interpreted in the same class of models $\mathfrak{C}$, we say that $\lanset$ is at least as expressive as $\lanset'$ (wrt $\mathfrak{C}$), abbreviated as $\lanset \succeq_{(\mathfrak{C})}\lanset'$ iff for each $A \in \lanset'$, there is a $B  \in \lanset$ such that $A \equiv B $. We use $\lanset \succ\lanset'$ as an abbreviation of $\lanset \succeq \lanset'$ and $\lanset' \not \succeq \lanset$. 
\end{definition}

\begin{proposition}  $\lb \succ\lanset_{\square}$.
\end{proposition}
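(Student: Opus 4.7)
The direction $\lb \succeq \lanset_\square$ is immediate from the grammars: every $\lanset_\square$-formula is already a $\lb$-formula, and a straightforward induction on formula structure shows that on any pointed bi-model $(\fmod,\cmod^{(\prime)},w^{(\prime)})$ the truth of a pure $\square$-formula depends only on the underlying Kripke model and the chosen world, matching Definition~\ref{basica}. So each $A \in \lanset_\square$ serves as its own $\lb$-equivalent, and this direction requires no real work.

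For the strict inclusion $\lanset_\square \not\succeq \lb$, my plan is to exhibit a formula of $\lb$ with no $\lanset_\square$-equivalent. The natural candidate is $\bi \top$: it is true at a left-hand point precisely when that point has some $Z$-correspondent, a property referencing the correspondence relation essentially and therefore intuitively invisible to the $Z$-blind fragment $\lanset_\square$. I would make this rigorous by constructing a tiny bi-model in which two worlds on the same side are $\lanset_\square$-indistinguishable but differ on $\bi \top$. Concretely, take $\fmod = (\cmod, \cmod', Z)$ where $\cmod$ has two $R$-isolated worlds $w, v$ at which no atom is true, $\cmod'$ consists of a single $R'$-isolated world $w'$ at which no atom is true, and $Z = \{(w, w')\}$. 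Forth, back and atomic harmony hold vacuously and $Z$ is non-empty, so $\fmod$ is a bi-model. Directly from the truth clauses, $\fmod, \cmod, w \models \bi \top$ whereas $\fmod, \cmod, v \not\models \bi \top$. On the other hand, the singleton $\{(w,v)\}$ is a bisimulation from $(\cmod, w)$ to $(\cmod, v)$ in the sense of Definition~\ref{def:bisimulation}, since atomic harmony, forth and back are all vacuous given the isolated nature of these worlds. Classical bisimulation invariance for $\lanset_\square$ then yields that $(\cmod, w)$ and $(\cmod, v)$ agree on every $\lanset_\square$-formula; combined with the lifting observation of the first paragraph, the same agreement holds for $(\fmod, \cmod, w)$ and $(\fmod, \cmod, v)$ as pointed bi-models, so no $\lanset_\square$-formula can be equivalent to $\bi \top$.

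The only step that requires any argument beyond unwinding definitions is the lifting observation from the first paragraph, which is really just bookkeeping over Definition~\ref{def:Truth} to check that $Z$ and $\cmod'$ play no role in evaluating a pure $\square$-formula at a left-hand pointed model (and symmetrically for the right). After that, the distinguishing example is so simple that I do not anticipate any genuine obstacle.
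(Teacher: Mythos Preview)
Your argument is correct, but it differs from the paper's in a small yet interesting way. The paper separates $\lb$ from $\lanset_\square$ by taking \emph{two} $\lb$-models $\fmod_1,\fmod_2$ whose underlying Kripke components are identical and which therefore agree on every $\lanset_\square$-formula at the point $w$, while $\fmod_1,w\models\nbi\perp$ and $\fmod_2,w\not\models\nbi\perp$ because $Z$ is empty in $\fmod_1$ and non-empty in $\fmod_2$; this is verified by a direct induction on $\lanset_\square$-formulas rather than by invoking bisimulation invariance. You instead work inside a \emph{single} structure and compare two left-hand points $w,v$ that are $\lanset_\square$-bisimilar (hence $\lanset_\square$-equivalent via the easy, always-valid direction of invariance) yet are distinguished by $\bi\top$. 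Your route has the pleasant feature that the witnessing structure is a genuine bi-model, so the separation already holds over the class of bi-models (the paper's $\fmod_1$ has $Z=\emptyset$ and is therefore only an $\lb$-model, not a bi-model); the paper's route, on the other hand, avoids appealing to bisimulation invariance and keeps the induction entirely self-contained.
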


\begin{proof} Since $\lb$ is an extension of $\lanset_\square$, we can always find a $\lb$-formula that is equivalent to any given $\lanset_\square$-formula $A$ ($A$ itself!), so we have that $\lb \succeq\lanset_\square$. For showing $\lanset_\square \not \succeq \lb$, it is enough to find two models satisfying the same $\lanset_\square$-formulas that do not satisfy the same $\lb$-formulas. Take $\mathfrak{M}_1$ and $\mathfrak{M}_2$ as depicted below:

\begin{center}
\begin{tikzpicture}[modal,world/.append style=
{minimum size=0.5cm}]


\node[world] (w) [label=below:$w$] {$p$};
\node (posw) [below=0.2cm of w] {};
\node (poswp) [right=0.2cm of w] {};
\node[draw=black, fit=(w) (posw) (poswp),label=below:$\cmod$](mod1) {};

\node[world] (v) [right=2cm of w,label=below:$v$] {$p$};
\node (posv) [below=0.2cm of v] {};
\node (posvp) [left=0.2cm of v] {};
\node[draw=black, fit=(v) (posv) (posvp),label=below:$\cmod'$](mod2) {};

\node (pos) [below of=w] {};
\node[draw=black, fit=(mod1) (mod2) (pos),label=below:$\fmod_1$](fmod) {};

\node[world] (w) [right=4.5cm of w, label=below:$w$] {$p$};
\node (posw) [below=0.2cm of w] {};
\node (poswp) [right=0.2cm of w] {};
\node[draw=black, fit=(w) (posw) (poswp),label=below:$\cmod$](mod1) {};

\node[world] (v) [right=2cm of w,label=below:$v$] {$p$};
\node (posv) [below=0.2cm of v] {};
\node (posvp) [left=0.2cm of v] {};
\node[draw=black, fit=(v) (posv) (posvp),label=below:$\cmod'$](mod2) {};

\node (pos) [below of=w] {};
\node[draw=black, fit=(mod1) (mod2) (pos),label=below:$\fmod_2$](fmod) {};
\draw[->] (w) edge[dashed] (v);

\end{tikzpicture}
\end{center}
    We can show (by induction on $\lanset_\square$ formulas) that $\mathfrak{M}_1$ and $\mathfrak{M}_2$ satisfy the same ones. However, it is obvious that $\mathfrak{M}_1,w\models \nbi \perp$, but $\mathfrak{M}_2,w\not \models \nbi \perp$.
\end{proof}

\subsection{(Un)definability}\label{sec:undefinability}

\begin{definition}[Definabality] Given a class of frames {(models)}  $\mathfrak{C}$ and a language $\lanset$ interpreted over $\mathfrak{C}$, we say that a property P is $\lanset$-definable (within $\mathfrak{C}$) through a set of formulas $\Gamma$ iff for all $\mathfrak{S} \in \mathfrak{C}$, $\mathfrak{S}$ satisfies P iff $\mathfrak{S}\models A$ for all $A \in \Gamma$. Moreover, we say that a property P is $\lanset$-definable (within $\mathfrak{C}$) iff it is 
    definable with some set of formulas $\Gamma$.
\end{definition}

The following two propositions show that the new modality $\nbi$ is sufficient to talk about the main definitional conditions of bisimulations. 
\begin{proposition} The property (atomic harmony) (see Def.\ \ref{def:bisimulation}) is $\lb$-definable in the class of all models.
\end{proposition}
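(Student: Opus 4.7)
The plan is to exhibit an explicit defining set $\Gamma \subseteq \lb$ and verify both directions of the definability biconditional. The natural candidate is
\[
\Gamma = \{\, p \to \nbi p \mid p \in \atm \,\} \cup \{\, \lnot p \to \nbi \lnot p \mid p \in \atm \,\}.
\]
The intuition is that atomic harmony is a biconditional constraint on each atom, and since $\nbi$ is a universal modality over $Z$-successors, the formula $p \to \nbi p$ captures one direction (atoms true at $w$ propagate to all $Z$-successors $w'$) while $\lnot p \to \nbi \lnot p$ captures the other (atoms false at $w$ stay false at all $Z$-successors). The two together amount to $p \leftrightarrow \nbi p$ in spirit, but stated with implications to avoid confusing the reader about where the biconditional is evaluated.

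For the right-to-left direction (atomic harmony implies validity of $\Gamma$), I would fix a model $\fmod = (\cmod,\cmod',Z)$ satisfying atomic harmony and any $w^* \in W \cup W'$. If $w^* \in W'$, then by the last clause of Def.\ \ref{def:Truth}, $\nbi A$ is vacuously true at right-hand worlds, so both families of formulas hold trivially at $w^*$. If $w^* = w \in W$ and $\fmod,\cmod,w\models p$, then for every $w' \in W'$ with $wZw'$ atomic harmony yields $w' \in V'(p)$, hence $\fmod,\cmod,w\models \nbi p$; the case of $\lnot p \to \nbi \lnot p$ is symmetric.

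For the left-to-right direction, I would assume $\fmod \models \Gamma$ and prove that $Z$ satisfies atomic harmony. Pick any $wZw'$ and any $p \in \atm$. If $w \in V(p)$, then $\fmod,\cmod,w\models p$, so by $p \to \nbi p$ we get $\fmod,\cmod,w\models \nbi p$, and in particular $\fmod,\cmod',w'\models p$, giving $w' \in V'(p)$. The converse implication $w' \in V'(p) \Rightarrow w \in V(p)$ is obtained by contraposition from $\lnot p \to \nbi \lnot p$ applied at $w$.

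The only subtlety worth flagging is the asymmetry of the semantics of $\nbi$ noted in Remark \ref{remark:sharedWorlds}: since $\nbi A$ is trivially satisfied at any right-hand world, no $\lb$-formula evaluated on the right can constrain the behaviour of $Z$ directly. This is why the symmetric-looking condition of atomic harmony must be encoded by the pair of implications above, both evaluated (non-trivially) only at left-hand worlds; I do not anticipate any other obstacle.
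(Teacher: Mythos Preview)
Your proposal is correct and essentially identical to the paper's proof: the paper also uses the defining set $\Gamma_{AH}=\{l\to\nbi l\mid l\text{ a literal}\}$, which is exactly your $\Gamma$, and verifies both directions of the biconditional by the same elementary semantic unfolding. The only cosmetic differences are that the paper argues one direction by contraposition and does not separately spell out the trivial case $w^*\in W'$, whereas you do.
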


\begin{proof}

Consider the set $$\Gamma_{AH}=\{l\to \nbi l \mid \text{ for any literal } l\}\footnote{A literal is an atom or its negation.}$$ All we have to prove is that all formulas of this set are valid in a model iff this model satisfies the condition of atomic harmony. Let $\fmod=(\cmod,\cmod',Z)$ be a model where ${\cal M} = ({\cal F}, V)$ and ${\cal M'} = ({\cal F'}, V')$. Assume first that $\fmod$ satisfies atomic harmony. Let us show that $l\to \nbi l$ is valid in $\fmod$ for any literal $l$. Take a world $w$ of $\fmod$ and suppose $\fmod,w \not\models \nbi l$. This implies, by the semantics of $\nbi$, that there is a $w' \in W'$ such that $w Z w'$ and $\fmod,w' \not\models l$. Since $\fmod$ satisfies atomic harmony, the latter implies that $\fmod,w \not\models l$, hence $\fmod,w \models  l \to \nbi l$. Since $w$ is any world of $\fmod$, we conclude that all formulas in $\Gamma_{AH}$ are valid in $\fmod$. Second, assume $\fmod$ does not satisfy atomic harmony, then there are $w \in W$ and $w' \in W'$ such that $wZw'$ and there are some $p \in \atm$ such that $w \in V(p)$ and $w' \notin V'(p)$ or vice versa. In the first case, we have that $\fmod,w\not \models p \to \nbi p$. In the second, we have that $\fmod,w\not \models \lnot p \to \nbi \lnot p$. Hence, $ l \to\nbi l$ is not valid at $\fmod$ for any literal $l$.

\end{proof}

\begin{proposition} Properties (forth) and (back) (see Def.\ \ref{def:bisimulation}) are $\lb$-definable in the class of all frames.
\end{proposition}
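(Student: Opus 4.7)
The plan is to exhibit, for each of the two properties, a schema of $\lb$-formulas whose frame validity is equivalent to the property. Concretely, I would propose
$$\Gamma_{F}=\{\bi\square A\to\square\bi A\mid A\in\lb\}\quad\text{for (forth)},$$
$$\Gamma_{B}=\{\bi\lozenge A\to\lozenge\bi A\mid A\in\lb\}\quad\text{for (back)}.$$
(A single atomic instance of each schema already suffices, as will be apparent from the refutation direction.) The two sub-proofs are completely parallel, so I would present the (forth) case in detail and merely sketch the (back) case.

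For the ``if'' direction of (forth), I would fix a frame $\mathfrak{F}$ satisfying the forth condition, an arbitrary model $\mathfrak{M}=(\cmod,\cmod',Z)$ based on it, and any formula $A$. Validity of $\bi\square A\to\square\bi A$ at worlds of $W'$ is immediate since, by Def.~\ref{def:Truth}, $\bi B$ is always false at worlds of $\cmod'$, so the antecedent is vacuously false. For a world $w\in W$, assume $\mathfrak{M},\cmod,w\models\bi\square A$, so that some $w'\in W'$ has $wZw'$ and $\mathfrak{M},\cmod',w'\models\square A$. Given any $v$ with $wRv$, (forth) supplies a $v'$ with $vZv'$ and $w'R'v'$; then $A$ holds at $v'$ by $\square A$ at $w'$, hence $\bi A$ at $v$, hence $\square\bi A$ at $w$.

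For the ``only if'' direction, I would argue contrapositively: assume (forth) fails at witnesses $w,w',v$ (so $wZw'$, $wRv$, yet no $v'$ satisfies both $vZv'$ and $w'R'v'$). Choose a fresh atom $p$ and a valuation with $V'(p)=R'[w']=\{u'\in W'\mid w'R'u'\}$ (the value of $V(p)$ and of $V^{(\prime)}$ on other atoms is irrelevant). Then $\square p$ holds at $w'$ in $\cmod'$, so $\bi\square p$ holds at $w$; but $\bi p$ fails at $v$, because any $u'$ with $vZu'$ would have to lie in $V'(p)=R'[w']$, contradicting the chosen failure of (forth). Hence $\square\bi p$ fails at $w$ and $\Gamma_F$ is not valid in $\mathfrak{F}$. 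The (back) case is analogous: validity under the property uses the back condition to convert a $Z;R'$-path witnessing $\bi\lozenge A$ at $w$ into an $R;Z$-path witnessing $\lozenge\bi A$; failure of (back) at $w,w',v'$ is refuted by taking $V'(p)=\{v'\}$, which makes $\bi\lozenge p$ true at $w$ while $\lozenge\bi p$ fails.

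I do not anticipate a genuine obstacle here: the argument is a Sahlqvist-style correspondence by hand. The only points that require care are (i) remembering that $\bi$ is vacuously false at $W'$-worlds, so that frame validity reduces to conditions at $W$-worlds, and (ii) choosing the refuting valuation on $\cmod'$ so that the existential witnesses required by the consequent are forced to coincide with the very world at which (forth) or (back) fails.
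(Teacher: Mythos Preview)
Your proof is correct. For (back) you use exactly the same formula as the paper, $\bi\lozenge p\to\lozenge\bi p$, and your soundness/refutation arguments match the paper's essentially line for line.

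For (forth) you diverge: the paper defines it via the two-variable formula $(\bi p\land\lozenge\nbi q)\to\bi(p\land\lozenge q)$, whereas you use the one-variable $\bi\square p\to\square\bi p$. Both are legitimate Sahlqvist correspondents of the same first-order condition, and your argument (including the refuting valuation $V'(p)=R'[w']$) is clean and correct. Your choice is arguably more natural: it is the obvious ``diamond-box commutation'' dual of the (back) formula, needs only one atom, and makes the symmetry between the two conditions visible at the object level. The paper's formula, on the other hand, is the one that reappears as the axiom \textbf{(FORTH)} in the calculus $\mathsf{L}_B$ and is used verbatim in the canonical-model proof of Lemma~\ref{lemma:Cannonicity}(\ref{canonForth}); its shape (with the $\lozenge\nbi q$ conjunct) is tailored to that consistency argument, where one needs to combine information coming from an $\mathcal{R}_1$-successor and a $\mathcal{Z}$-successor simultaneously. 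So the trade-off is elegance of the correspondence result (your version) versus direct reusability in the completeness proof (the paper's version); the two formulas generate the same logic over the bimodal $K$-base, since they pin down the same class of frames.
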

\begin{proof}
    
For (forth) consider the formula $$(\bi p\land \lozenge [\,b\,] q)\to \bi (p\land \lozenge q)$$All we have to prove is that if ${\mathfrak F} = ({\cal F}, {\cal F'}, Z)$ is a frame for $\lb$, the following condition holds:
\begin{itemize}
\item[] $(\bi p\land \lozenge [\,b\,]q)\to \bi (p\land \lozenge q)$ is valid in ${\mathfrak F}$ iff $Z$ satisfies (forth).
\end{itemize}
For the `if' direction: Assume that $Z$ satisfies (forth) and let ${\mathfrak M} = ({\cal M}, {\cal M'}, Z)$  be a model based on ${\mathfrak F}$. It is sufficient to prove the validity of $(\bi p\land \lozenge [\,b\,]q)\to \bi (p\land \lozenge q)$ in ${\mathfrak M}$  to show that it is valid on ${\mathfrak F}$.  Consider $w^\ast$ in ${\mathfrak M}$.  We have two cases depending on whether $w^\ast$ is in ${\cal M}$ or in ${\cal M}'$.\par\smallskip

Case $(i)$:  Assume that $w^\ast$ is in ${\cal M}$ and ${\cal M}, w^\ast\models \bi p\land \lozenge[\,b\,] q$. Then, ${\cal M}, w^\ast\models \bi p$ and ${\cal M}, w^\ast\models \lozenge[\,b\,] q$. Hence,
there exists $w'$ in $ {\cal M'}$ such that $w^\ast Zw'$ and ${\cal M'}, w'\models p$ and there exists $v$ in $ {\cal M}$ such that  $w^\ast Rv$ and ${\cal M}, v\models [\,b\,]q$. As $w^\ast Zw'$ and $w^\ast Rv$, by (forth) there exists $v'$ in $\cal M'$ such that $vZv'$ and $w'R'v'$. Since  ${\cal M}, v\models [\,b\,]q$  and $vZv'$, we obtain ${\cal M'}, v'\models q$, and as $w'R'v'$ we have ${\cal M'}, w'\models \lozenge q$. hence ${\cal M'}, w'\models  p\land\lozenge q$. Finally, since $w^\ast Zw'$, we obtain ${\cal M}, w^\ast\models
 \bi (p\land \lozenge q)$. thus, ${\cal M}, w^\ast\models (\bi p\land \lozenge [\,b\,]q)\to \bi (p\land \lozenge q)$.
 \par\smallskip
  Case $(ii)$: if $w^\ast$ is in ${\cal M'}$, then ${\cal M'}, w^\ast\not\models \bi p$, hence ${\cal M}, w^\ast\not \models \bi p\land \lozenge [\,b\,]q$ and consequently ${\cal M}, w^\ast\models (\bi p\land \lozenge [\,b\,]q)\to \bi (p\land \lozenge q) $.
  \par\smallskip
  As a result of cases $(i)$ and $(ii)$, we can conclude that $(\bi p\land \lozenge [\,b\,]q)\to \bi (p\land \lozenge q)$ is valid in ${\mathfrak M}$.
\par\smallskip
For the `only if' direction: If condition (forth) does not hold for $Z$, then there are worlds $w, v$ in ${\cal F}$ and $w'$ in ${\cal F'}$ such that $wZw'$ and $wRv$, but there is no $v'$ in $\cal F^\prime$ such that $vZv'$ and $w'R'v'$. Now we define a model ${\mathfrak M} = ({\cal M}, {\cal M'}, Z)$, based on ${\mathfrak F}$,  where ${\cal M} = ({\cal F}, V)$ and ${\cal M'} = ({\cal F'}, V')$ are a pair of Kripke models such that $ V(p) = V(q) =  \varnothing$, $V'(p) = \{w'\}$ and $V'(q) = \{v^\ast\mid vZv^\ast\}$. This model refutes $(\bi p\land \lozenge [\,b\,]q)\to \bi (p\land \lozenge q)$ at $w$ and, consequently, invalidates it in  ${\mathfrak M}$. Details are left to the reader.

\par \medskip
Finally, for (back), take the formula $$\bi \lozenge p\to \lozenge \bi p$$Let ${\mathfrak F} = ({\cal F}, {\cal F'}, Z)$ a frame for  $\lb$. We have to prove the following condition:
\begin{itemize}
\item[] $\bi \lozenge p\to \lozenge \bi p$ is valid in ${\mathfrak F}$ iff $Z$ satisfies (back).
\end{itemize}
For the `if'  direction: Assume $Z$ satisfies (back) and let ${\mathfrak M} = ({\cal M}, {\cal M'}, Z)$ be a model based on ${\mathfrak F}$. To prove the validity of $\bi \lozenge p\to \lozenge \bi p$ in ${\mathfrak F} $ we will show that it is valid in ${\mathfrak M}$. Consider $w^\ast$ in ${\mathfrak M}$. We have two cases. \par\smallskip Case $(i)$: 
 Assume $w^\ast$ is in ${\cal M}$ and ${\cal M}, w^\ast\models \bi \lozenge p$. Then there exists $w^\prime$ in ${\cal M}^\prime$  such that $w^\ast Zw^\prime$ and ${\cal M}^\prime, w^\prime\models \lozenge p$. So there exists $v^\prime$ in ${\cal M^\prime}$ such that $w'R'v'$ and ${\cal M^\prime}, v^\prime\models p$. As $w^\ast Zw'$ and $w'R'v'$, by (back), we have that there exists $v$ in ${\cal M}$  such that $vZv^\prime$ and $w^\ast Rv$. Since ${\cal M^\prime}, v^\prime\models p$ and $vZv'$ we obtain ${\cal M}, v\models \bi p$ and as $w^\ast Rv$ we finally have ${\cal M}, w\models \lozenge\bi p$. Therefore, ${\cal M}, w^\ast\models \bi \lozenge p\to \lozenge \bi p$ . 
 \par\smallskip
  Case $(ii)$: if $w^\ast$ is in ${\cal M'}$, then ${\cal M'}, w^\ast\not\models \bi \lozenge p$, consequently ${\cal M'}, w^\ast\models \bi \lozenge p\to \lozenge \bi p$.
  \par\smallskip
  From the cases (i) and (ii), we obtain $\bi \lozenge p\to \lozenge \bi p$ is valid in ${\mathfrak M}$.
 \par\smallskip
 For the `only if' direction: If condition (back) does not hold for $Z$, then there are worlds $w$ in ${\cal F}$ and $w', v'$ in ${\cal F'}$ such that $wZw'$ and $w'R'v'$, but there is no $v$ in ${\cal F}$ such that $vZv'$ and $wRv$. Let   ${\mathfrak M} = ({\cal M}, {\cal M'}, Z)$ be a model (based on ${\mathfrak F}$)  where ${\cal M} = ({\cal F}, V)$ and  ${\cal M'} = ({\cal F'}, V')$ are a pair of Kripke models with $V(p) =   \varnothing$ and $V'(p) = \{v'\}$. This model refutes $\bi \lozenge p\to \lozenge \bi p$  at $w$ and, therefore, the formula is invalid in ${\mathfrak M}$.
\end{proof}

Moreover, we can also show that the inclusion of $\nbi$ in the language is necessary for capturing bisimulations.
\begin{proposition} The property (atomic harmony) is not $\basiclan$-definable in the class of all models.
\end{proposition}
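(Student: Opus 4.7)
The plan is to exhibit two $\lb$-models $\fmod_1$ and $\fmod_2$ that share the same underlying pair of Kripke components $(\cmod,\cmod')$ but differ only in their correspondence relation $Z$, such that (atomic harmony) holds in one and fails in the other. Since this would force the two models to agree on all $\basiclan$-formulas while disagreeing on the target property, no set of $\basiclan$-formulas can define it.

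The key observation underpinning this strategy is that the semantic clauses for $\basiclan$-formulas never mention $Z$. A routine induction on $A \in \basiclan$ shows that the truth value $\fmod, w^\ast \models A$ depends only on whichever of $\cmod$ or $\cmod'$ contains $w^\ast$. Consequently, any two models $(\cmod,\cmod',Z_1)$ and $(\cmod,\cmod',Z_2)$ with identical Kripke parts validate exactly the same $\basiclan$-formulas, irrespective of $Z_1$ and $Z_2$.

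For concrete witnesses I would take the smallest example that works. Let $\cmod$ have a single world $w$ with $R = \varnothing$ and $w \in V(p)$ for a fixed $p \in \atm$, and let $\cmod'$ have a single world $w'$ with $R' = \varnothing$ and $w' \notin V'(p)$. Put $Z_1 = \varnothing$, so that $\fmod_1 = (\cmod,\cmod',Z_1)$ satisfies (atomic harmony) vacuously, and $Z_2 = \{(w,w')\}$, so that $\fmod_2 = (\cmod,\cmod',Z_2)$ fails it at the pair $(w,w')$, where $w$ and $w'$ disagree on $p$. Both are legitimate $\lb$-models because models are not required to be bi-models; in particular $Z$ is allowed to be empty and is not constrained to satisfy any of the bisimulation clauses.

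Combining the two steps: if (atomic harmony) were $\basiclan$-definable by some $\Gamma \subseteq \basiclan$, the first step would yield $\fmod_1 \models \Gamma$ iff $\fmod_2 \models \Gamma$, which is incompatible with (atomic harmony) holding in $\fmod_1$ but not in $\fmod_2$. No real obstacle arises; the only point worth double-checking is that $\fmod_2$ qualifies as an $\lb$-model even though its $Z$ violates atomic harmony, which is immediate from the definitions.
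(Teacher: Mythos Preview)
Your proposal is correct and follows essentially the same approach as the paper: exhibit two $\lb$-models with identical Kripke components but different $Z$, so that they validate the same $\basiclan$-formulas while disagreeing on atomic harmony. The only cosmetic difference is in the concrete witnesses: the paper keeps $Z$ non-empty in both models (using a two-world $\cmod'$ and redirecting $Z$ to a world that either agrees or disagrees on $p$), whereas you let $Z_1=\varnothing$ and obtain atomic harmony vacuously; both choices work and rest on the same key observation that $\basiclan$ cannot see $Z$.
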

\begin{proof}[Sketch of the proof.] The following two models can be shown to satisfy the same $\basiclan$-formulas world-by-world. This can be proved by an easy induction on the structure of $\basiclan$-formulas. Then, both models validate the same $\basiclan$-formulas. Since $\fmod_1$ satisfies (atomic harmony) but $\fmod_2$ does not, then the property is not $\basiclan$-definable.

\begin{center}
\begin{tikzpicture}[modal,world/.append style=
{minimum size=0.7cm}]
\node[world] (w) [] {$p$};
\node[tworld] (pos) [below of=w] {};
\node[world] (wp) [right=2cm of w] {$p$};
\node[world] (v) [below of=wp] {};
\draw[->] (w) edge[dashed] (wp);

\node (pos1) [below=0.6cm of pos] {};
\node[draw=black, fit=(w) (pos),label=below:$\cmod$](mod1) {};
\node[draw=black, fit=(wp) (v),label=below:$\cmod'$](mod2) {};
\node[draw=black, fit=(mod1) (mod2) (pos1),label=below:$\fmod_1$](fmod) {};

\node[world] (w) [right=4.5cm of w] {$p$};
\node[tworld] (pos) [below of=w] {};
\node[world] (wp) [right=2cm of w] {$p$};
\node[world] (v) [below of=wp] {};
\draw[->] (w) edge[dashed] (v);

\node (pos1) [below=0.6cm of pos] {};
\node[draw=black, fit=(w) (pos),label=below:$\cmod$](mod1) {};
\node[draw=black, fit=(wp) (v),label=below:$\cmod'$](mod2) {};
\node[draw=black, fit=(mod1) (mod2) (pos1),label=below:$\fmod_2$](fmod) {};

\end{tikzpicture}
\end{center}
    
\end{proof}

\begin{proposition} Properties (forth) and (back) are not $\lanset_\square$-definable in the class of all frames.
\end{proposition}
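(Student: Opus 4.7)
The idea is to exploit the fact that $\lanset_\square$-formulas do not reference the correspondence relation $Z$ at all: at any world of an $\lb$-frame $\mathfrak{F} = ({\cal F}, {\cal F}', Z)$, the satisfaction of a $\lanset_\square$-formula depends only on the underlying Kripke component to which the world belongs, so changing $Z$ while keeping ${\cal F}$ and ${\cal F}'$ fixed never alters $\lanset_\square$-validity. First I would make this precise with a short induction on $\lanset_\square$-formulas, yielding the invariance lemma: if $\mathfrak{F}_i = ({\cal F}, {\cal F}', Z_i)$ for $i = 1, 2$, then for every $\lanset_\square$-formula $A$ we have $\mathfrak{F}_1 \models A$ iff $\mathfrak{F}_2 \models A$.

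With this in hand it suffices to exhibit, for each target property, two frames with the same component frames whose $Z$-components disagree on the property. Take ${\cal F} = (\{w_0, w_1\}, \{(w_0, w_1)\})$ and ${\cal F}' = (\{w_0', w_1'\}, \{(w_0', w_1')\})$, and let $\mathfrak{F}_1 = ({\cal F}, {\cal F}', \varnothing)$ and $\mathfrak{F}_2 = ({\cal F}, {\cal F}', \{(w_0, w_0')\})$. The empty correspondence in $\mathfrak{F}_1$ vacuously satisfies both (forth) and (back). In $\mathfrak{F}_2$, however, we have $w_0 Z w_0'$ together with $w_0 R w_1$ and $w_0' R' w_1'$, and since $(w_1, w_1') \notin Z$ (indeed $w_1$ is $Z$-related to nothing, and nothing is $Z$-related to $w_1'$), conditions (forth) and (back) both fail. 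The invariance lemma then yields that neither property is $\lanset_\square$-definable: any candidate set $\Gamma$ that would define (forth) or (back) is validated by $\mathfrak{F}_1$ iff it is validated by $\mathfrak{F}_2$, while the two frames differ on the property.

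The only delicate step is phrasing the invariance lemma correctly under the convention of Remark~\ref{remark:sharedWorlds}, namely that truth at a world is side-sensitive whenever $W \cap W' \neq \varnothing$. Once the induction is formulated on triples of the form $(\mathfrak{M}, {\cal M}^{(\prime)}, w^{(\prime)})$ as in Definition~\ref{def:Truth}, the inductive cases for atoms, $\bot$, $\to$ and $\square$ all proceed verbatim, since none of the $\lanset_\square$-clauses mention $Z$, and the argument carries over from truth at worlds to validity on the frame by quantifying over all valuations.
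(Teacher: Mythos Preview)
Your proof is correct and follows essentially the same strategy as the paper: both arguments rest on the observation that $\lanset_\square$-validity is insensitive to the correspondence relation $Z$, and then exhibit two frames differing only in $Z$ such that one satisfies the property and the other does not. The only minor differences are that you make the invariance lemma explicit (the paper merely says the two frames ``are isomorphic if we ignore $Z$'') and that you use the empty $Z$ to obtain vacuous satisfaction of (forth) and (back), whereas the paper uses a non-trivial bisimulation for that role; your choice is arguably tidier, since no verification of the positive instance is needed.
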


\begin{proof}[Sketch of the proof.] The following two frames can be shown to validate the same $\basiclan$-formulas, because they are isomorphic if we ignore $Z$ (as $\basiclan$ does). The one on the left-hand side satisfies both (forth) and (back), but the one on the right-hand side does not; hence, these properties are not $\basiclan$-definable.

\begin{center}
\begin{tikzpicture}[modal,world/.append style=
{minimum size=0.7cm}]


\node[world] (w) [] {};
\node[tworld] (pos) [below of=w] {};
\draw[->] (w) edge[reflexive below] (w);
\node (pos1) [below=0.6cm of pos] {};
\node[draw=black, fit=(w) (pos),label=below:$\cfra$](mod1) {};


\node[world] (wp) [right=2cm of w] {};
\node[world] (v) [below of=wp] {};
\draw[<->] (wp) edge (v);

\node[draw=black, fit=(wp) (v),label=below:$\cfra'$](mod2) {};


\node[draw=black, fit=(mod1) (mod2) (pos1),label=below:$\ffra_1$](fmod) {};
\draw[->] (w) edge[dashed] (wp);
\draw[->] (w) edge[dashed] (v);


\node[world] (w) [right=4.5cm of w] {};
\node[tworld] (pos) [below of=w] {};
\draw[->] (w) edge[reflexive below] (w);
\node (pos1) [below=0.6cm of pos] {};
\node[draw=black, fit=(w) (pos),label=below:$\cfra$](mod1) {};


\node[world] (wp) [right=2cm of w] {};
\node[world] (v) [below of=wp] {};
\draw[<->] (wp) edge (v);

\node[draw=black, fit=(wp) (v),label=below:$\cfra'$](mod2) {};


\node[draw=black, fit=(mod1) (mod2) (pos1),label=below:$\ffra_2$](fmod) {};
\draw[->] (w) edge[dashed] (wp);

\end{tikzpicture}
\end{center}

\end{proof}

The next proposition highlights an undefinable property of bi-frames that will be shown to be harmless for completeness purposes.

\begin{proposition}\label{Znovacia} The non-emptiness of $Z$ is not $\lb$-definable in the class of all frames.
\end{proposition}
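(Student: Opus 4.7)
The plan is to argue by contradiction: suppose some $\Gamma\subseteq \lb$ defines non-emptiness of $Z$, and exhibit two frames that $\Gamma$ cannot separate. The candidates will be the simplest possible: $\mathfrak{F}_1=(\mathcal{F},\mathcal{F}',\{(w,w')\})$ and $\mathfrak{F}_2=(\mathcal{F},\mathcal{F}',\varnothing)$, where $\mathcal{F}=(\{w\},\varnothing)$ and $\mathcal{F}'=(\{w'\},\varnothing)$. Since $Z_{\mathfrak{F}_1}\neq\varnothing$ and $Z_{\mathfrak{F}_2}=\varnothing$, the supposition would force $\mathfrak{F}_1\models\Gamma$ and $\mathfrak{F}_2\not\models\Gamma$, so it suffices to prove the inclusion $\mathrm{Valid}(\mathfrak{F}_1)\subseteq\mathrm{Valid}(\mathfrak{F}_2)$ to derive a contradiction.

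To prove the inclusion, I would introduce a syntactic \emph{flattening} $f\colon\lb\to\lb$ defined recursively by $f(p)=p$, $f(\bot)=\bot$, $f(B\to C)=f(B)\to f(C)$, $f(\square B)=\top$ and $f(\nbi B)=\top$, so that $f(A)$ is purely propositional. The core step is an easy induction on $A$ establishing three truth equalities: for any valuations $V, V'$, the truth of $A$ at $w$ in the corresponding model on $\mathfrak{F}_2$ coincides with the propositional value of $f(A)$ under $V$ at $w$; the truth of $A$ at $w'$ in that same model coincides with the value of $f(A)$ under $V'$ at $w'$; and exactly the same equality holds at $w'$ for any model on $\mathfrak{F}_1$. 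The induction goes through because $R$ and $R'$ are empty (trivialising $\square$), because $Z$ is empty in $\mathfrak{F}_2$ (trivialising $\nbi$ at $w$), and because the second clause for $\nbi$ in Def.\ \ref{def:Truth} already makes $\nbi B$ vacuously true at $w'$ in \emph{both} frames, irrespective of $Z$.

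Combining the three equalities, validity of $A$ on $\mathfrak{F}_2$ is equivalent to $f(A)$ being a propositional tautology, and exactly the same condition characterises validity of $A$ merely at $w'$ in $\mathfrak{F}_1$. Hence any $A$ valid throughout $\mathfrak{F}_1$ is in particular valid at $w'$, and therefore valid on all of $\mathfrak{F}_2$; this proves the inclusion and closes the argument. The delicate point to monitor is the third equality, at $w'$ in $\mathfrak{F}_1$: it relies entirely on the asymmetric satisfaction clause of Def.\ \ref{def:Truth} for $\nbi$ at right-component worlds. Without that asymmetry the formula $\nbi\bot$ would already refute the inclusion, and the whole strategy would collapse.
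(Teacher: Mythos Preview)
Your argument is correct. The flattening $f$ and the three truth equalities go through exactly as you describe, the key inductive step for $\nbi$ at $w'$ in $\mathfrak{F}_1$ indeed relies on the asymmetric clause of Def.\ \ref{def:Truth}, and the chain ``valid on $\mathfrak{F}_1$ $\Rightarrow$ valid at $w'$ in $\mathfrak{F}_1$ $\Rightarrow$ $f(A)$ is a tautology $\Rightarrow$ valid on $\mathfrak{F}_2$'' is sound.

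The paper takes a different route. It builds a larger frame $\mathfrak{F}_1$ with four worlds $w,v,w',v'$ (no $R$-edges) and $Z=\{(w,w')\}$, and then observes that the two-world frame $((\{v\},\emptyset),(\{v'\},\emptyset),\emptyset)$ is a \emph{generated subframe} of $\mathfrak{F}_1$. Since $\square$ and $\nbi$ are normal, validity is preserved under generated substructures, so every formula valid on $\mathfrak{F}_1$ is valid on the subframe, which has empty $Z$. Thus the paper appeals to a standard off-the-shelf preservation theorem rather than computing truth directly. Your approach is more elementary and entirely self-contained, and it works on a smaller pair of frames where no proper generated subframe with empty $Z$ even exists; the price is that it is tailored to these very simple frames. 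The paper's approach is more conceptual and would scale immediately to richer examples, but it outsources the actual work to \cite[Theorem~3.14]{blackburn2010modal}. Both exploit the same underlying phenomenon: there are points in a frame with $Z\neq\emptyset$ from which the non-emptiness of $Z$ is modally invisible.
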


\begin{proof}[Sketch of the proof]
 Consider the following frame:

\begin{center}
\begin{tikzpicture}[modal,world/.append style=
{minimum size=1cm}]


\node[world] (w) [] {$w$};
\node[world] (v) [below=1.5cm of w] {$v$};
\node[draw=black, fit=(w) (v),label=below:$\cfra$](mod1) {};


\node[world] (wp) [right=2cm of w] {$w'$};
\node[world] (vp) [below=1.5cm of wp] {$v'$};

\node[draw=black, fit=(wp) (vp),label=below:$\cfra'$](mod2) {};


\node (pos1) [below=0.6cm of v] {};
\node[draw=black, fit=(mod1) (mod2) (pos1),label=below:$\ffra_1$](fmod) {};
\draw[->] (w) edge[dashed] (wp);

\end{tikzpicture}
\end{center}

 Note that $\ffra_{\{v,v'\}}=((\{v\},\emptyset),(\{v'\},\emptyset),\emptyset)$ is a generated subframe of $\ffra_1$ (see definitions 2.5 and 3.13 of \cite{blackburn2010modal} for the precise definition). As such, being $\square$ and $\nbi$ normal modal operators, generated submodels can be shown to preserve validity (\cite[Theorem 3.14]{blackburn2010modal}). But then, we have a frame with $Z\neq \emptyset$ ($\ffra_1$) and one with $Z=\emptyset$ ($\ffra_{\{v,v'\}}$) such that all $\lb$-formulas that are valid in the former are also valid in the latter. Hence, the non-emptiness of $Z$ is not $\lb$-definable in the class of all frames.
\end{proof}

\subsection{Axiom system}

We now introduce an axiomatisation in $\lb$ of the valid formulas in the class of all bi-models, called \textsf{L}$_B$, which will later be shown to be sound and complete. The calculus \textsf{L}$_B$ is given by the following axiom schemes and rules:
\par\medskip\noindent

\noindent\textbf{AXIOM SCHEMES}:
\begin{description}
\item[\textbf{(TAU)}] All propositional tautologies
\item[${\bf (K\square)}$] $\square(A\to B)\to (\square A\to \square B)$
\item[${\bf (K\nbi)}$] $[\,b\,](A\to B)\to ([\,b\,] A\to [\,b\,]B)$
\item[${\bf (FORTH)}$]  $(\bi A\land \lozenge [\,b\,] B)\to \bi (A\land \lozenge B)$
 \item[${\bf (BACK)}$] $\bi \lozenge A\to\lozenge\bi A$ 
\item[${\bf (HARM)}$]  $l\to\nbi l$\quad(where $l$ is a literal)
\item[${\bf (NTS)}$]  $\nbi \nbi \perp$  
\end{description}

\noindent \textbf{RULES}:
\begin{description}
\item[\textbf{(MP)}] If $\vdash A\to B$ and $\vdash A$, then $\vdash B$.\par

\item[${\bf (N\square)}$]  If $\vdash A$, then $\vdash \square A$

\item[${\bf (N\nbi)}$]  If $\vdash A$, then $\vdash \nbi A$ 
\end{description}

The notions of \textsf{L}$_B$-proof, \textsf{L}$_B$-theoremhood (shortened $\vdash A$), \textsf{L}$_B$-de\-du\-ci\-bi\-lity from assumptions (shortened $\Gamma \vdash A$), and \textsf{L}$_B$-consistency are standard \cite{blackburn2010modal}. \par \medskip

The following theorem schema of $\mathsf{L}_B$ captures, in our object language, a fundamental property of bisimulations: bisimilar states satisfy the same modal formulas (see Theorem \ref{thm:henessy}). This is a neat example of how meta-theory about modal logic can now be performed at the object level.

\begin{proposition}\label{resultado} $\vdash A \to \nbi A$ for all $A \in \basiclan$
\end{proposition}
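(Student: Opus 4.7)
The plan is to proceed by structural induction on $A \in \basiclan$, proving simultaneously the strengthened claim that both $\vdash A \to \nbi A$ and $\vdash \neg A \to \nbi \neg A$ hold. The duplication is needed for the Boolean step: for $A = B \to C$ the argument for the first direction will require atomic harmony for $\neg B$, which is not a subformula of $A$, so we must carry both directions in parallel. The base cases are routine: $A = p$ is an instance of (HARM) in both directions, while $A = \perp$ reduces to a propositional tautology in one direction and to $\vdash \nbi \top$ (via (N$\nbi$) applied to the tautology $\top$) in the other.

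For the Boolean step $A = B \to C$, I exploit that $\nbi$ is a normal modal operator by (K$\nbi$) and (N$\nbi$). From the tautologies $\neg B \to (B \to C)$ and $C \to (B \to C)$ one derives $\vdash \nbi \neg B \to \nbi (B \to C)$ and $\vdash \nbi C \to \nbi (B \to C)$; combining these with the inductive hypotheses for $\neg B$ and $C$ yields $(B \to C) \to \nbi (B \to C)$. The dual statement $\neg (B \to C) \to \nbi \neg (B \to C)$ is handled analogously, first deriving $\vdash \nbi B \land \nbi \neg C \to \nbi (B \land \neg C)$ by iterating the $K\nbi$/$N\nbi$ scheme, and then propagating along $B \land \neg C \to \neg (B \to C)$.

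The modal step $A = \square B$ is the main obstacle, and this is where the bisimulation-specific axioms enter. For $\vdash \square B \to \nbi \square B$ I argue contrapositively: the target is equivalent to $\vdash (\square B \land \bi \lozenge \neg B) \to \perp$; axiom (BACK) converts the second conjunct into $\lozenge \bi \neg B$, and the contrapositive of the IH $B \to \nbi B$, namely $\bi \neg B \to \neg B$, lifted under $\lozenge$ (which is monotone in any normal modal logic), produces $\lozenge \neg B$, contradicting $\square B$. For $\vdash \lozenge \neg B \to \nbi \lozenge \neg B$, I use (FORTH): the IH $\neg B \to \nbi \neg B$ together with monotonicity of $\lozenge$ upgrades $\lozenge \neg B$ to $\lozenge \nbi \neg B$; assuming $\bi \square B$ toward contradiction and instantiating (FORTH) with its schematic variables mapped to $\square B$ and $\neg B$, one obtains $\bi (\square B \land \lozenge \neg B)$, which is inconsistent because $\square B \land \lozenge \neg B$ is provably false in normal modal logic and $\vdash \neg \phi$ entails $\vdash \neg \bi \phi$ via (N$\nbi$) combined with the duality $\bi \phi \leftrightarrow \neg \nbi \neg \phi$. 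The delicate part throughout is the bookkeeping between the $\nbi$ and $\bi$ sides and the non-symmetric dual uses of (BACK) and (FORTH) in the two halves of the modal case.
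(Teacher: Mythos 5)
Your proposal is correct and follows essentially the same route as the paper's proof: the same strengthened induction carrying both $\vdash A \to \nbi A$ and $\vdash \lnot A \to \nbi\lnot A$, with (HARM) for atoms, normality of $\nbi$ for the Boolean case, (BACK) for the positive $\square$ case and (FORTH) for the negative one. The only cosmetic difference is that you run the positive $\square$ case contrapositively from the diamond form $\bi\lozenge A \to \lozenge\bi A$, whereas the paper first passes to the equivalent box form $\square\nbi A \to \nbi\square A$; the two are interderivable by duality.
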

\begin{proof}
    We prove that for every $A \in {\cal L}_\square$,
    \begin{eqnarray*}
        & \vdash & A \to \nbi A\text{, and}\\
        & \vdash & \lnot A \to \nbi \lnot A
    \end{eqnarray*}
by induction on $A$. 

If $A$ is an atom, the proof is trivial by  (HARM): $l \to \nbi l$ for any  (positive or negative) literal $l$.

For $A = \perp$, we get $\vdash \perp \to \nbi \perp$ (TAU) and $\vdash \lnot \perp \to \nbi \lnot \perp$ (observe that $\vdash \nbi \lnot \perp$ by (TAU) and (N$\nbi$)).

If $A = B \to C$, assume (Induction Hypothesis):
$$\vdash B \to \nbi B, \vdash\lnot B \to \nbi \lnot B, \hbox{ and the same for } C\qquad \hbox{IH}$$
We have $\vdash \lnot B \to (B \to C)$ (TAU) and so $\vdash \nbi \lnot B \to \nbi (B \to C)$ (using (N$\nbi$), (K$\nbi$) and (MP)), and with IH ($\vdash \neg B\to \nbi\neg B$), (TAU) and (MP) we arrive at $$\vdash \lnot B \to \nbi (B  \to C)\qquad (\dagger)$$Also, $\vdash C \to (B \to C)$ (TAU) and again by (N$\nbi$), (K$\nbi$) and (MP), we get $\vdash \nbi C \to \nbi (B \to C)$, that with IH ($\vdash C\to\nbi C$) produces $$\vdash C \to \nbi (B \to C) \qquad (\ddagger)$$Now, from (TAU) ($\vdash (\lnot B \to D) \to ((C \to D) \to ((B \to C) \to D))$, for $D = \nbi(B \to C)$), and using ($\dagger$), ($\ddagger$) and (MP) we arrive at $$\vdash (B \to C) \to \nbi (B \to C)$$
For the negative case, using propositional reasoning and IH ($\vdash B\to\nbi B$ and $\vdash \neg C\to\nbi\neg C$), we have $$\vdash \lnot (B \to C) \to \nbi B \land \nbi\lnot C\quad (\dagger\dagger)$$
Moreover, $\vdash B \to (\lnot C \to \lnot (B \to C))$ (TAU), so using (N$\nbi$), (K$\nbi$) and (MP) we obtain $\vdash \nbi B \to (\nbi \lnot C \to \nbi \lnot (B \to C))$. Together with ($\dagger\dagger$) and propositional reasoning, we arrive at $\vdash \lnot (B \to C) \to \nbi \lnot (B \to C)$.
\par\medskip
If $A = \square B$, assume (Induction Hypothesis, IH) that $\vdash B \to \nbi B$. By using (N$\square$), (K$\square$) and (MP), we arrive at $\vdash \square B \to \square \nbi B$, and using an equivalent version of (BACK), namely $\vdash \square\nbi B \to \nbi \square B$, we finally get $\vdash \square B \to \nbi \square B$. 

For the negative case, consider the Induction Hypothesis (IH) $\vdash \lnot B \to \nbi\lnot B$. We have to prove $\vdash \lnot \square B \to \nbi \lnot \square B$. First, we show that $$\vdash \lnot \nbi \lnot \square B \land \lozenge \nbi \lnot B \to \perp\quad  (\ddagger\ddagger)$$This is justified by $\vdash \lnot \nbi \lnot \square B \land \lozenge \nbi \lnot B \to \bi (\square B \land \lozenge \lnot B)$, which is obtained from (FORTH) and the definition of $\bi$, and $\vdash \bi (\square B \land \lozenge \lnot B) \to \perp$, which is obtained by simple modal reasoning. Now, $(\ddagger\ddagger$) implies $$\vdash \lozenge \nbi \lnot B \to \nbi \lnot \square B \quad (\dag\dag\dag)$$ by propositional reasoning. But observe that $\vdash \lnot\square B \to \lozenge \nbi \lnot B$ by using IH ($\vdash\lnot B\to\nbi\lnot B$) and modal reasoning. Together with ($\dag\dag\dag$) this implies $\vdash \lnot\square B \to \nbi \lnot \square B$.  
\end{proof}

\section{Soundness and completeness}\label{sec:completeness}

This section is devoted to the proof of soundness and completeness of $\mathsf{L}_B$ wrt the class of all bi-models.
{Soundness of \textsf{L}$_B$ wrt the class of all bi-models} is straightforward. For completeness, we employ the {\em canonical model} technique.
\par \medskip

We call $MC$ the set of all maximal \textsf{L}$_B$-consistent sets (mc-sets, for short).  In what floows, familiarity with the basic properties of mc-sets is assumed.
 
\begin{lemma}[Lindenbaum's Lemma]\label{lindenbaum}
 For all consistent sets $\Delta$, there exists an mc-set $\Gamma$ that contains $\Delta$.
\end{lemma}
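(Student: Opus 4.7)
The plan is to follow the textbook Lindenbaum construction, adapted to $\mathsf{L}_B$. Since $\atm$ is denumerable and the grammar of $\lb$ generates formulas from countably many primitives, the set of $\lb$-formulas is countable, so I can fix an enumeration $A_0,A_1,A_2,\ldots$ of all $\lb$-formulas. Starting from $\Delta_0 = \Delta$, I would inductively define a chain $\Delta_0 \subseteq \Delta_1 \subseteq \cdots$ of $\mathsf{L}_B$-consistent sets by the usual recipe: set $\Delta_{n+1} = \Delta_n \cup \{A_n\}$ whenever this enlarged set is $\mathsf{L}_B$-consistent, and $\Delta_{n+1} = \Delta_n$ otherwise. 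Then I would take $\Gamma = \bigcup_{n \in \mathbb{N}} \Delta_n$ as the candidate mc-set and check the two properties required.

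For consistency of $\Gamma$, I would use the standard compactness-style argument: if some finite $\Sigma \subseteq \Gamma$ were such that $\Sigma \vdash \bot$, then taking $n$ to be the largest index at which an element of $\Sigma$ enters the chain gives $\Sigma \subseteq \Delta_n$, contradicting the inductive hypothesis that each $\Delta_n$ is consistent (which itself is guaranteed by the case split in the construction). For maximality, I would show that for every formula $A = A_n$, either $A \in \Gamma$, or else $\Delta_n \cup \{A\}$ was inconsistent at stage $n$, in which case a standard application of the deduction theorem (which holds in $\mathsf{L}_B$ because the necessitation rules $(\mathbf{N\square})$ and $(\mathbf{N\nbi})$ apply only to theorems, not to arbitrary assumptions) yields $\Delta_n \vdash \lnot A$ and hence $\lnot A \in \Gamma$ by consistency.

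I do not expect any genuine obstacle: the argument is entirely syntactic and relies only on features already present in the basic modal logic $\mathsf{K}$, namely classical propositional reasoning plus necessitation rules restricted to theorems. The specific axioms $(\mathbf{FORTH})$, $(\mathbf{BACK})$, $(\mathbf{HARM})$ and $(\mathbf{NTS})$ play no role in this lemma, which concerns only the abstract deductive calculus. The construction and verification are standard, matching the presentation in \cite[Lemma 4.17]{blackburn2010modal}.
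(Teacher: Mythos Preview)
Your proposal is correct and is precisely the standard argument the paper has in mind: the paper states Lindenbaum's Lemma without proof, relying on the reference \cite{blackburn2010modal} you yourself cite, so your construction via an enumeration and stage-by-stage extension is exactly the intended route. One minor tightening: in the maximality step, rather than arguing that $\lnot A \in \Gamma$ ``by consistency'', it is cleaner to observe directly that if $A = A_n \notin \Gamma$ then $\Delta_n \cup \{A_n\}$ was inconsistent, so $\Gamma \cup \{A\}$ is inconsistent, which already gives maximality in the sense of having no proper consistent extension.
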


\begin{definition}[Cannonical relations]
\label{def:CanRel}
We define the relations ${\cal R},{\cal Z}\subseteq MC\times MC$  as follows:

\begin{enumerate}
\item $\Gamma  {\cal R}\Delta$ iff $\{A\mid \square A \in\Gamma\}\subseteq\Delta$.
\item $\Gamma{\cal Z}\Delta$ iff $\{A\mid \nbi A\in\Gamma \}\subseteq\Delta$.
 \end{enumerate}
\end{definition}

The following proposition can be proved by standard means. It is used to establish $\Gamma {\cal R}\Delta$ and $\Gamma{\cal Z}\Delta$ based on $\lozenge$ and $\bi$, respectively:

\begin{proposition}\label{prop:revRZ}Given two mc-sets $\Gamma$ and $\Delta$,
\begin{enumerate}
\item $\Gamma {\cal R}\Delta$ iff $\{\lozenge A\mid A \in\Delta\}\subseteq\Gamma$.
\item $\Gamma{\cal Z}\Delta$ iff $\{\bi A\mid A\in\Delta \}\subseteq\Gamma$.
 \end{enumerate}    
\end{proposition}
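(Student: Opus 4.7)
The plan is to prove both items by the same pattern, exploiting the duality between the box-like operators ($\square$, $\nbi$) and their diamond-like counterparts ($\lozenge$, $\bi$), together with the standard bookkeeping available for maximal consistent sets: for every formula $A$ and every mc-set $\Theta$, exactly one of $A$ and $\lnot A$ belongs to $\Theta$, and $\Theta$ is closed under provable consequence.

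For item 1, the left-to-right direction will go by contraposition on the witness. Suppose $\Gamma {\cal R} \Delta$ and fix $A \in \Delta$. If $\lozenge A \notin \Gamma$, then maximal consistency yields $\lnot \lozenge A \in \Gamma$, hence $\square \lnot A \in \Gamma$ by the definition of $\lozenge$; applying the definition of ${\cal R}$ gives $\lnot A \in \Delta$, contradicting the consistency of $\Delta$ since $A \in \Delta$. The converse will proceed symmetrically: if $\{\lozenge A \mid A \in \Delta\} \subseteq \Gamma$ and $\square A \in \Gamma$, then assuming $A \notin \Delta$ gives $\lnot A \in \Delta$, hence $\lozenge \lnot A \in \Gamma$ by hypothesis; but this formula is provably equivalent to $\lnot \square A$, which together with $\square A \in \Gamma$ contradicts the consistency of $\Gamma$.

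Item 2 will be an essentially line-by-line transcription of item 1, substituting $\nbi$ for $\square$ and $\bi$ for $\lozenge$ throughout. The only auxiliary facts needed are that $\bi \lnot A$ is, by definition, $\lnot \nbi A$ (and conversely), together with the closure of mc-sets under propositional reasoning. Note that none of the specific bi-frame axioms (FORTH, BACK, HARM, NTS) are used in the argument---the proposition is purely about the abstract behaviour of the two canonical accessibility relations.

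There is no real obstacle here: the proposition is the standard ``dual reformulation'' of canonical accessibility relations that appears in virtually every completeness proof via canonical models. The only minor point of care is to verify that the duality equivalences being appealed to are themselves $\mathsf{L}_B$-theorems (which they are, via (TAU) plus the definitions of $\lozenge$ and $\bi$), so that the use of consequence-closure of mc-sets is legitimate.
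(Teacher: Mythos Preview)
Your proposal is correct and is exactly the standard argument the paper alludes to: the paper itself does not spell out a proof here, writing only that the proposition ``can be proved by standard means,'' and your contraposition-via-duality argument is precisely those standard means. Nothing further is needed.
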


\begin{definition}[Canonnical domains]\label{def:CanDomains}
    We define the following sets of mc-sets:
    \begin{itemize}
        \item $MC_1  =  MC$.
        \item $MC_2  =  \{ \Gamma \in MC \mid \forall n \in \mathbb{N}, \, \square^n \nbi \perp \in \Gamma\}$.
    \end{itemize}
\end{definition}

\begin{convention}\label{conve:canonicalSide}
Since $MC_1 \cap MC_2 \neq \emptyset$, but we want this to hold for convenience, we assume that elements in $MC_1 \cap MC_2$ are tagged when packed into $MC_2$, so that their occurrences in each of the two sets ($MC_1$ and $MC_2$) can be distinguished. See Remark \ref{remark:sharedWorlds} for the reason behind this choice.
\end{convention}

\begin{definition} [Canonical model]\label{CanMod}
The  {\em canonical model} for \textsf{L}$_B$ is the tuple ${\mathfrak M}_c = ( {\cal M}_1, {\cal M}_2, \cal{Z})$ where:
\begin{itemize}
\item ${\cal M}_1 = ({\cal F}_1, {\cal V}_1)$, where:
\begin{itemize}

\item  ${\cal F}_1 = (MC_1, \mathcal{R}_1)$, where ${\cal R}_1 = \mathcal{R}$.
\item  $\mathcal{V}_1 (p) = \{\Gamma\in MC_1\mid p\in \Gamma\}$, for any $p\in \atm$.
\end{itemize}
\item ${\cal M}_2 = ({\cal F}_2, {\cal V}_2)$, where:
\begin{itemize}

\item  ${\cal F}_2= (MC_2, {\cal R}_2)$, where ${\cal R}_2= {\cal R} \cap (MC_2\times MC_2)$.

\item  $\mathcal{V}_2 (p) = \{\Gamma\in MC_2\mid p\in \Gamma\}$, for any $p\in \atm$.

\end{itemize}

\end{itemize} 

\end{definition}

\begin{lemma}[Existence]\label{lemma:existence} We have the following:
\begin{enumerate}
\item\label{exI1}  If $\lozenge A\in \Gamma$ and $\Gamma \in MC_1$, then there exists $\Delta \in MC_1$ such that $\Gamma{\cal R}_1\Delta$ and $A\in\Delta$.
\item\label{exI2}  If $\lozenge A\in \Gamma$ and $\Gamma \in MC_2$, then there exists $\Delta \in MC_2$ such that $\Gamma{\cal R}_2\Delta$ and $A\in\Delta$.
\item\label{exI3} If $\bi A\in \Gamma$ and $\Gamma \in MC_1$, then there exists $\Delta \in MC_2$ such that $\Gamma{\cal Z}\Delta$ and $A\in\Delta$.
\end{enumerate}
\end{lemma}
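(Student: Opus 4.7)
The plan is to prove each of the three items by a Lindenbaum-style construction, extending an appropriate consistent set to a maximal consistent one. The arguments for items~\ref{exI1} and~\ref{exI2} follow the standard template; the delicate point is guaranteeing membership in $MC_2$ in items~\ref{exI2} and~\ref{exI3}, and the latter requires an auxiliary theorem that I will establish separately.

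For item~\ref{exI1}, I would take $X = \{A\} \cup \{B \mid \square B \in \Gamma\}$ and show $X$ is consistent by the usual K-style contradiction: were a finite sub-conjunction $B_1 \land \cdots \land B_k$ of elements of $X\setminus\{A\}$ to entail $\lnot A$, then (N$\square$), (K$\square$) and (MP) would place $\square \lnot A$ in $\Gamma$, contradicting $\lozenge A \in \Gamma$. A Lindenbaum extension of $X$ then yields the desired $\Delta \in MC_1$ with $\Gamma \mathcal{R}_1 \Delta$ and $A \in \Delta$. For item~\ref{exI2}, the same $X$ works and consistency is identical. What needs to be checked is that the extension lies in $MC_2$: since $\Gamma \in MC_2$, we have $\square^{n+1} \nbi \bot = \square(\square^n \nbi \bot) \in \Gamma$ for every $n \geq 0$, so $\square^n \nbi \bot \in \{B \mid \square B \in \Gamma\} \subseteq X$ and each $\square^n \nbi \bot$ belongs to the extension.

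The heart of the proof is item~\ref{exI3}. I would take $X = \{A\} \cup \{B \mid \nbi B \in \Gamma\}$; consistency is shown analogously, this time using (N$\nbi$), (K$\nbi$), and the fact that $\bi A \in \Gamma$ together with $\nbi B_1,\dots,\nbi B_k \in \Gamma$ would otherwise yield $\nbi \lnot A \in \Gamma$. To force the Lindenbaum extension into $MC_2$, I would prove the auxiliary theorem
$$\vdash \nbi \square^n \nbi \bot \quad \text{for every } n \geq 0,$$
whose base case $n = 0$ is exactly (NTS). The inductive step relies on the derived principle $\vdash \square \nbi C \to \nbi \square C$, obtained as the contrapositive of (BACK): the instance $\bi \lozenge \lnot C \to \lozenge \bi \lnot C$, after unfolding $\bi = \lnot\nbi\lnot$ and $\lozenge = \lnot\square\lnot$ and simplifying double negations, becomes $\lnot\nbi\square C \to \lnot\square\nbi C$. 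Applying (N$\square$) to $\vdash \nbi \square^n \nbi \bot$ gives $\vdash \square \nbi \square^n \nbi \bot$, and the derived principle then delivers $\vdash \nbi \square^{n+1} \nbi \bot$. With the auxiliary theorem in hand, every $\square^n \nbi \bot$ lies in $\{B \mid \nbi B \in \Gamma\} \subseteq X$, so the extension belongs to $MC_2$ as required.

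The main obstacle is locating and proving the auxiliary theorem $\vdash \nbi \square^n \nbi \bot$; once the contrapositive of (BACK) is recognised as the bridge that pushes $\square$ inside $\nbi$ out to $\nbi$ inside $\square$, the induction is routine and the rest of the lemma follows by uniform Lindenbaum constructions.
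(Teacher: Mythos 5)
Your proposal is correct and follows essentially the same route as the paper: standard Lindenbaum constructions for all three items, with membership in $MC_2$ secured for item~\ref{exI2} by observing that $\square^{n+1}\nbi\bot\in\Gamma$ puts every $\square^n\nbi\bot$ into the extension, and for item~\ref{exI3} by the auxiliary theorem $\vdash\nbi\square^n\nbi\perp$, proved by induction from (NTS) using the reversed form $\square\nbi A\to\nbi\square A$ of (BACK). This is exactly the paper's argument, merely with the routine consistency checks written out in more detail.
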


\begin{proof}
    The proof of existence lemmas is common in modal logic. In our case, we also need to establish the following results:
    \begin{eqnarray}
    \label{eq:exP2}
    & \{A  \mid \square A  \in \Gamma\}\subseteq \Lambda \text{ and } \Gamma \in MC_2 \quad  \text{implies} \quad  \Lambda \in MC_2 &\\
    \label{eq:exP3}
    & \{A  \mid \nbi A  \in \Gamma\}\subseteq \Lambda  \quad  \text{implies} \quad  \Gamma \notin MC_2\ \ \text{and}\ \ \Lambda \in MC_2 &
    \end{eqnarray}

To prove \eqref{eq:exP2}, given that $\Gamma\in MC_2$, $\square^n \nbi \perp \in \Gamma$ for all $n$. By induction, using $\{A  \mid \square A  \in \Gamma\}\subseteq \Lambda$, we get that $\square^n \nbi \perp \in \Lambda$ for all $n$, so $\Lambda \in MC_2$.

For \eqref{eq:exP3} we have $\nbi \perp \notin \Gamma$, because it would imply $\perp \in\Lambda$, but $\Lambda$ is an mc-set. So $\Gamma\notin MC_2$ (observe $\square^0 \nbi \perp \notin \Gamma$). Also, observe that
\[ \hbox{For all } n:\quad  \vdash \nbi \square^n \nbi \perp\] 
 This can be proved by induction on $n$. For $n = 0$ (Base Case) we have $\nbi\square^0\nbi\perp$ (just the axiom (NTS)). For $n = k$,  for a given $k\geq 0$, assume that we have $\vdash \nbi \square^k \nbi \perp$ (Inductive Hypothesis, IH). We have to prove $\vdash \nbi \square^{k+1} \nbi \perp$. First, by applying rule (N$\square$) to IH we get $\vdash \square \nbi \square^k \nbi \perp$. Then, using an equivalent version of (BACK), namely $\square \nbi A \to \nbi \square A$ (with $A = \square^k \nbi \perp$), we obtain $\vdash \nbi \square \square^k \nbi \perp$, that is, $\vdash \nbi\square^{k+1} \nbi \perp$ as required. Hence, for all $n$, $ \nbi \square^n \nbi \perp \in \Gamma$, and using $\{ A \mid \nbi  A \in \Gamma\}\subseteq \Lambda$ we get that $\square^n \nbi \perp \in \Lambda$ for all $n$, so $\Lambda\in MC_2$. 

With properties \eqref{eq:exP2} and \eqref{eq:exP3} and common proofs of existence lemmas, we get the proofs of items \ref{exI1}--\ref{exI3}.  
\end{proof}

\begin{lemma}[Canonicity]\label{lemma:Cannonicity} The canonical model for \textsf{L}$_B$ is a bi-model (for $\lanset_B$). More detailedly:

\begin{enumerate}
    \item\label{canonNonempty} $\mathcal{Z}\neq \emptyset$.
    \item\label{canonZ} $\mathcal{Z}\subseteq MC_1 \times MC_2$.
    \item\label{canonAH} $\mathfrak{M}_c$ satisfies (atomic harmony).
    \item\label{canonForth} $\mathfrak{M}_c$ satisfies (forth).
    \item\label{canonBack} $\mathfrak{M}_c$ satisfies (back).
\end{enumerate}
\end{lemma}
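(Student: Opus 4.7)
My plan is to tackle the five items of the lemma in order, with the axioms (HARM), (NTS), (FORTH), (BACK) providing the closure properties of the canonical relations and the Existence Lemma (Lemma \ref{lemma:existence}), Proposition \ref{prop:revRZ}, and Lindenbaum (Lemma \ref{lindenbaum}) producing witness mc-sets whenever needed.

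Items \ref{canonZ} and \ref{canonAH} are almost immediate. For \ref{canonZ}, unfolding $\Gamma\mathcal{Z}\Delta$ gives $\{A\mid\nbi A\in\Gamma\}\subseteq\Delta$, which is precisely the antecedent of equation \eqref{eq:exP3} and hence forces $\Delta\in MC_2$; $\Gamma\in MC_1=MC$ is trivial. For \ref{canonAH}, if $\Gamma\mathcal{Z}\Delta$ and $p\in\Gamma$, then (HARM) with $l=p$ plus maximal consistency yields $\nbi p\in\Gamma$, so $p\in\Delta$; the converse uses (HARM) with $l=\lnot p$. Item \ref{canonNonempty} needs a touch more: the one-world bi-model with $W=W'=\{w\}$, $R=R'=\varnothing$ and $Z=\{(w,w)\}$ falsifies $\nbi\perp$ at $w$ on the left side, so by soundness $\not\vdash\nbi\perp$; Lindenbaum then supplies an mc-set $\Gamma$ containing $\bi\top$, and Lemma \ref{lemma:existence}.\ref{exI3} produces $\Delta\in MC_2$ with $\Gamma\mathcal{Z}\Delta$.

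For (forth), item \ref{canonForth}, fix $\Gamma\mathcal{Z}\Delta$ and $\Gamma\mathcal{R}_1\Gamma'$ and consider
\[T=\{C\mid\square C\in\Delta\}\cup\{A\mid\nbi A\in\Gamma'\}.\]
I will show $T$ is consistent; a Lindenbaum extension $\Delta'$ then belongs to $MC_2$ by \eqref{eq:exP3} and witnesses (forth) by construction. If instead $T$ were inconsistent, standard compactness plus finite distribution of $\square$ and $\nbi$ over conjunction yields $\square C\in\Delta$ and $\nbi A\in\Gamma'$ with $\vdash A\to\lnot C$. Then (N$\nbi$)+(K$\nbi$) give $\nbi\lnot C\in\Gamma'$, so $\lozenge\nbi\lnot C\in\Gamma$ via Proposition \ref{prop:revRZ}, while $\bi\square C\in\Gamma$ by the same proposition applied to $\Gamma\mathcal{Z}\Delta$. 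Instantiating (FORTH) as $(\bi\square C\wedge\lozenge\nbi\lnot C)\to\bi(\square C\wedge\lozenge\lnot C)$ places $\bi(\square C\wedge\lozenge\lnot C)\in\Gamma$. But $\square C\wedge\lozenge\lnot C$ is equivalent to $\square C\wedge\lnot\square C$, so (N$\nbi$) gives $\vdash\lnot\bi(\square C\wedge\lozenge\lnot C)$, contradicting the consistency of $\Gamma$.

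The main obstacle lies in item \ref{canonBack} (back), because $\bi$ does \emph{not} distribute over conjunction, so one cannot simply collapse finitely many $\bi Y_j$ into a single $\bi Y$ the way (K$\square$) and (K$\nbi$) let one do on the $\square$- and $\nbi$-sides. Setting
\[T=\{C\mid\square C\in\Gamma\}\cup\{\bi Y\mid Y\in\Delta'\},\]
the key insight is to reduce to a single-conjunct case from the \emph{other side}: use maximality of $\Delta'$ to replace $\{Y_1,\dots,Y_m\}\subseteq\Delta'$ by their conjunction $Y=\bigwedge Y_j\in\Delta'$, and then exploit monotonicity of $\bi$ (from $\vdash Y\to Y_j$ one derives $\vdash\bi Y\to\bi Y_j$ by contraposing and applying (N$\nbi$)+(K$\nbi$)) so that a single $\bi Y$ implies every $\bi Y_j$. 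The chain $Y\in\Delta'\Rightarrow\lozenge Y\in\Delta\Rightarrow\bi\lozenge Y\in\Gamma$, followed by (BACK), then yields $\lozenge\bi Y\in\Gamma$; the standard K-principle $\vdash\square C\wedge\lozenge\bi Y\to\lozenge(C\wedge\bi Y)$ consolidates this with $\square C\in\Gamma$ into $\lozenge(C\wedge\bi Y)\in\Gamma$, so $C\wedge\bi Y$ is consistent, whence so is $C\wedge\bigwedge\bi Y_j$. Lindenbaum finally gives $\Gamma'\in MC_1$ with $\Gamma\mathcal{R}_1\Gamma'$ and $\{\bi A\mid A\in\Delta'\}\subseteq\Gamma'$, which by Proposition \ref{prop:revRZ} is exactly $\Gamma'\mathcal{Z}\Delta'$.
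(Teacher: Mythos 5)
Your proposal is correct, and for items \ref{canonNonempty}--\ref{canonForth} it follows essentially the same route as the paper: the same appeal to soundness plus Lindenbaum and Lemma \ref{lemma:existence}.\ref{exI3} for non-emptiness, the same use of \eqref{eq:exP3} for item \ref{canonZ}, the same (HARM) argument for atomic harmony, and the same consistency set and (FORTH) instantiation for item \ref{canonForth}. The one place where you genuinely diverge is item \ref{canonBack}, and it is precisely the step the paper flags as delicate. Both proofs reduce to the consistency of $\{A \mid \square A \in \Gamma\} \cup \{\bi Y \mid Y \in \Delta'\}$ and both must cope with the fact that $\bi$ does not distribute over conjunctions of the $\bi Y_j$. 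The paper resolves this semantically inside the canonical model: from $\lozenge\bi\mathcal{B} \in \Gamma_1$ it invokes the Existence Lemma twice to manufacture auxiliary mc-sets $\Gamma_5, \Gamma_6'$, recovers each $\bi B_i \in \Gamma_5$ via Proposition \ref{prop:revRZ}, and derives $\lozenge \mathcal{B}^{\langle b\rangle} \in \Gamma_1$ to contradict $\square(\mathcal{B}^{\langle b\rangle} \to \bot) \in \Gamma_1$. You instead stay purely syntactic: monotonicity of $\bi$ (derived by contraposing (N$\nbi$)+(K$\nbi$)) gives $\vdash \bi\bigwedge_j Y_j \to \bi Y_j$, so the single formula $\lozenge(C \wedge \bi\bigwedge_j Y_j) \in \Gamma$, obtained from (BACK) and the K-principle $\square C \wedge \lozenge D \to \lozenge(C \wedge D)$, already certifies consistency of $C \wedge \bigwedge_j \bi Y_j$. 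Your version is shorter and avoids the detour through witness mc-sets, at no cost in generality; the paper's version has the mild advantage of reusing the already-proved Existence Lemma rather than introducing an extra derived rule.
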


\begin{proof}
For item \ref{canonNonempty}, we first show that $\{\bi \top\}$ is consistent. For suppose it were not. Then $\vdash_B\lnot \bi \top$, which implies, by soundness of $\mathsf{L}_B$, that $\lnot \bi \top$ is valid in all bi-models. Note that this is not the case (take any model with $Z\neq \emptyset$ and any $w \in W$ such that $wZw'$). Now, since  $\{\bi \top\}$ is consistent, by the Lindembaum Lemma we have that $\{\bi \top\}\subseteq \Gamma$ for some maximally consistent $\Gamma$. But then, by Lemma~\ref{lemma:existence}, we have that there is a $\Delta \in MC_2$ such that $\Gamma \mathcal{Z}\Delta$.
\par \medskip

Item \ref{canonZ} is a direct consequence of \eqref{eq:exP3} in the proof of Lemma~\ref{lemma:existence}.
\par \medskip

For item \ref{canonAH}. From left to right: If $\Gamma\in {\cal V}_1(p)$, then $p\in\Gamma$ by definition of ${\cal V}_1$, so $\nbi p\in \Gamma$ by (HARM) and as $\Gamma{\cal Z}\Delta$ we obtain $p\in\Delta$, hence $\Delta\in {\cal V}_2(p)$ by definition of ${\cal V}_2$ (since $\Delta \in MC_2$). From right to left: By contraposition, if $\Gamma\notin {\cal V}_1(p)$, then $p\notin\Gamma$ (by definition of ${\cal V}_1$, since $\Gamma\in MC_1$). So $\neg p\in\Gamma$, hence by (HARM) we have  $\nbi \neg p\in \Gamma$ and as $\Gamma{\cal Z}\Delta$ we obtain $\neg p\in\Delta$, that is, $p\notin \Delta$ which implies that $\Delta\notin {\cal V}_2(p)$ (by definition of ${\cal V}_2$).
\par\medskip

    For item \ref{canonForth}, we want to show that if $\Gamma_1 \mathcal{Z} \Gamma_2'$ and $\Gamma_1\mathcal{R}_1\Gamma_3$,
then there exists $\Gamma_4'$ such that $\Gamma_3\mathcal{Z}\Gamma_4'$ and $\Gamma_2' \mathcal{R}_2 \Gamma_4'$.

\medskip{}

\noindent{}We prove that the
set:
\[\Delta = \{A \mid \square A \in \Gamma_2'\} \cup \{B \mid [b]B \in
  \Gamma_3\},\] is consistent. Then, using Lemma~\ref{lindenbaum} we
obtain the required $\Gamma_4'$. In case that $\Delta$ were inconsistent,
there would be $A_1,\dots,A_n$ with $\square A_i \in \Gamma_2'$
($1\leq i \leq n$) and $B_1,\dots,B_m$ with $[b] B_i \in \Gamma_3$
($1\leq i \leq m$) such that
\begin{equation*}
  \label{eq:forth1}
  \vdash \mathcal{A} \to (\mathcal{B}\to \bot)
\end{equation*}
where $\mathcal{A} = A_1 \land\dots\land A_n$ and
$\mathcal{B} = B_1 \land\dots\land B_m$. By using ($N\square$), ($K\square$) and ($MP$),
\begin{equation}
  \label{eq:forth2}
  \vdash \square\mathcal{A} \to \square(\mathcal{B}\to \bot)
\end{equation}
Observe that given that $\Gamma_2'$ and $\Gamma_3$ are maximally
consistent sets, $\square \mathcal{A} \in \Gamma_2'$ and
$[b]\mathcal{B} \in \Gamma_3$ (both $\square$ and $[b]$ distribute
over conjunction). Given $\square \mathcal{A} \in \Gamma_2'$ and using
\eqref{eq:forth2} we get $\square(\mathcal{B}\to \bot)\in\Gamma_2'$
and given $\Gamma_1 \mathcal{Z} \Gamma_2'$,
$\langle b \rangle\square(\mathcal{B}\to \bot)\in\Gamma_1$.

Also, given $\Gamma_1\mathcal{R}_1 \Gamma_3$ and
$[b]\mathcal{B} \in \Gamma_3$, $\Diamond [b]\mathcal{B} \in
\Gamma_1$. By using (FORTH) axiom, we get
\begin{equation*}
  \label{eq:1}
  \langle b\rangle (\square (\mathcal{B} \to \bot) \land \Diamond
  \mathcal{B}) \in \Gamma_1
\end{equation*}
which leads to a contradiction, as it implies using Lemma~\ref{lemma:existence} the existence of a maximally consistent set
with both $\mathcal{B}\to\bot$ and $\mathcal{B}$.

\par \medskip
    For item \ref{canonBack},  we prove that if $\Gamma_1 \mathcal{Z} \Gamma_2'$ and $\Gamma_2'\mathcal{R}_2 \Gamma_3'$,
then there exists $\Gamma_4$ such that $\Gamma_4\mathcal{Z}\Gamma_3'$
and $\Gamma_1\mathcal{R}_1 \Gamma_4$.

\medskip{}

\noindent{}The key in this case is the consistency of this set:
\[\Delta =  \{A \mid
  \square A \in \Gamma_1\} \cup \{\langle b\rangle B \mid B \in \Gamma_3'\}\]
In case of $\Delta$ being inconsistent, there exist
$A_1,\dots,A_n$ with $\square A_i \in \Gamma_1$
($1\leq i \leq n$) and $\bi B_1,\dots, \bi B_m$ with $B_i \in \Gamma_3'$
($1\leq i \leq m$) 
 such that
\begin{equation*}
  \label{eq:back1}
  \vdash \mathcal{A} \to (\mathcal{B}^{\langle b\rangle} \to \bot)
\end{equation*}
where $\mathcal{A} = A_1 \land\dots\land A_n$ and $\mathcal{B}^{\langle b\rangle} = \langle b\rangle B_1
\land\dots\land \langle b\rangle B_m$. By using ($N\square$), ($K\square$) and ($MP$),
\begin{equation}
  \label{eq:back2}
  \vdash \square\mathcal{A} \to \square(\mathcal{B}^{\langle b\rangle}\to \bot)
\end{equation}
Given that $\Gamma_1$ is an mc-set and $\square A_i \in \Gamma_1$ for $i\leq i\leq n$, then $\square \mathcal{A} \in \Gamma_1$, and using \eqref{eq:back2}, $\square(\mathcal{B}^{\langle   b\rangle}\to \bot) \in \Gamma_1$. Also, given that $\Gamma_3'$ is
also an mc-set, $\mathcal{B} \in \Gamma_3$ ($\mathcal{B} =
B_1\land\dots\land B_m$), and using $\Gamma_1 \mathcal{Z} \Gamma_2'$
and $\Gamma_2'\mathcal{R}_2 \Gamma_3'$ we get $\langle b\rangle\Diamond
\mathcal{B}\in \Gamma_1$. Using (BACK), $\Diamond \langle
b\rangle \mathcal{B}\in \Gamma_1$. Given Lemma~\ref{lemma:existence}, this
implies the existence of maximally consistent sets $\Gamma_5$ and
$\Gamma_6'$ such that $\Gamma_1 \mathcal{R}_1 \Gamma_5$,
$\Gamma_5\mathcal{Z}\Gamma_6'$ and $\mathcal{B}\in\Gamma_6'$. Then,
$B_i\in\Gamma_6'$ (for $i\leq i\leq m$) which leads to $\Diamond
\mathcal{B}^{\langle b\rangle}\in\Gamma_1$. This produces a
contradiction using $\square(\mathcal{B}^{\langle
  b\rangle}\to \bot) \in \Gamma_1$ because it implies the existence of
a mc-set with both $\mathcal{B}^{\langle
  b\rangle}\to \bot$ and $\mathcal{B}^{\langle b\rangle}$.

So, $\Delta$ is consistent and using Lemma~\ref{lindenbaum} it can be
extended to the maximally consistent set $\Gamma_4$. Def.~\ref{def:CanRel} and Prop.~\ref{prop:revRZ} imply respectively that  $\Gamma_1\mathcal{R}_1 \Gamma_4$ and $\Gamma_4\mathcal{Z}\Gamma_3'$.
\end{proof}

\begin{lemma}[Truth]
\label{truthLemma}
For all mc-set $\Gamma$ and every $A \in L_{\square\nbi}$,
\begin{itemize}
    \item  If $\Gamma \in MC_1$, then $A \in \Gamma$ iff $\fmod_c,\cmod_1,\Gamma\models A$.
\item  If $\Gamma \in MC_2$, then $A \in \Gamma$ iff $\fmod_c,\cmod_2,\Gamma\models A$.
    
\end{itemize}
\end{lemma}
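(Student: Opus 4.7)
The plan is to proceed by simultaneous induction on the structure of $A$, proving the two biconditionals (for $\Gamma \in MC_1$ and $\Gamma \in MC_2$) in parallel. This joint formulation is essential because the $\nbi$-step crosses from one side to the other, so the induction hypothesis must cover both cases at once. Convention \ref{conve:canonicalSide}, which tags elements of $MC_1 \cap MC_2$ so that ``the same'' mc-set appearing on both sides counts as two distinct worlds, is precisely what makes this induction coherent with the semantics (see also Remark \ref{remark:sharedWorlds}).

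The easy cases are routine. For atoms $p$, the biconditionals follow immediately from the definitions of $\mathcal{V}_1$ and $\mathcal{V}_2$; for $\perp$, both sides of each biconditional fail (by mc-set consistency and by the truth clause). The Boolean step $A = B \to C$ uses maximal consistency of $\Gamma$ together with IH on both sides. For $A = \square B$, I would follow the standard Kripke-style argument: left-to-right uses Def.\ \ref{def:CanRel} directly, and right-to-left passes through $\lozenge \neg B \in \Gamma$ and items \ref{exI1} or \ref{exI2} of Lemma \ref{lemma:existence}; in both directions the successor $\Delta$ stays on the same side as $\Gamma$, so IH applies on the same bullet.

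The key case is $A = \nbi B$, where the case split is essentially asymmetric. If $\Gamma \in MC_2$, both sides of the biconditional hold unconditionally: the truth clause vacuously gives $\fmod_c, \cmod_2, \Gamma \models \nbi B$, and syntactically $\nbi \perp \in \Gamma$ (taking $n=0$ in Def.\ \ref{def:CanDomains}) combined with $\vdash \perp \to B$ and the rules (N$\nbi$), (K$\nbi$), (MP) yield $\nbi B \in \Gamma$ for every $B$. If instead $\Gamma \in MC_1$, the left-to-right direction uses Def.\ \ref{def:CanRel} to conclude $B \in \Delta$ for every $\mathcal{Z}$-successor $\Delta$, which lies in $MC_2$ by Lemma \ref{lemma:Cannonicity}(\ref{canonZ}) so that IH applies on the right-side bullet; the right-to-left direction proceeds by contraposition, noting that $\nbi B \notin \Gamma$ gives $\bi \neg B \in \Gamma$ and invoking item \ref{exI3} of Lemma \ref{lemma:existence} to produce a witness $\Delta \in MC_2$ with $\Gamma \mathcal{Z} \Delta$ and $\neg B \in \Delta$, which falsifies $B$ by IH.

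The main delicate point is the $\nbi$-step on the right side: one has to recognise that the design of $MC_2$ (requiring $\square^n \nbi \perp$ for all $n$) and the semantic stipulation that $\nbi$-formulas are vacuously true on the right are precisely aligned, so that both directions of the biconditional reduce to global facts about every $\Gamma \in MC_2$. Once this observation is in place, the remaining steps mirror familiar canonical-model arguments, with the only bookkeeping novelty being that one must consistently pick the correct bullet of the IH according to which of $MC_1$ or $MC_2$ the successor supplied by Lemma \ref{lemma:existence} lands in.
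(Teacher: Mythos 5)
Your proposal is correct and follows essentially the same route as the paper's proof: induction on $A$ with routine Boolean and $\square$ cases, and the same asymmetric treatment of $\nbi B$ (for $\Gamma \in MC_1$, Definition~\ref{def:CanRel} plus Lemma~\ref{lemma:Cannonicity}(\ref{canonZ}) in one direction and Lemma~\ref{lemma:existence}(\ref{exI3}) by contraposition in the other; for $\Gamma \in MC_2$, both sides hold unconditionally via $\nbi\perp \in \Gamma$, $\vdash \nbi\perp \to \nbi B$, and the vacuous truth clause). Your observation that the definition of $MC_2$ and the vacuous semantic clause for $\nbi$ on the right are deliberately aligned is exactly the point the paper's proof turns on.
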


\begin{proof}
    The proof is done by induction on $A$. Let $\Gamma$ be any mc-set.
    If $A$ is an atom $p$, since any mc-set is in $MC_1$, by using definition of ${\cal V}_1$ (Definition~\ref{CanMod}) we have  $p\in\Gamma$ iff $\fmod_c,\cmod_1,\Gamma\models p$,  Moreover, if it is also the case that $\Gamma\in MC_2$, then we obtain a similar result by definition of ${\cal V}_2$ (Definition~\ref{CanMod}).\par
    The cases for the constant $\bot$ and the Boolean connective $\to$ are standard. The same is true for $\square$. Let us consider the special case where $A = \nbi B$.\par
    Now, we have to consider two possibilities:\par
    (1) $\Gamma \in MC_1$. We have to show that $\nbi B \in \Gamma$ iff $\fmod_c,\cmod_1,\Gamma\models \nbi B$.  \\
    From left to right: let $\nbi B\in\Gamma$. Assume (IH), that for every mc-set $\Delta$, $B \in \Delta$ iff $\fmod_c,\cmod_1,\Delta\models B$. Then  Lemma~\ref{lemma:Cannonicity}(2) and Definition~\ref{def:CanRel}(2) guarantee that for every mc-set $\Lambda$ such that $\Gamma {\cal Z} \Lambda$, $\Lambda \in MC_2$ and $B \in \Lambda$. By IH and Definition~\ref{def:Truth} this implies $\fmod_c,\cmod_1,\Gamma \models \nbi B$. From right to left:  By contraposition, if $\nbi B \not \in \Gamma$, then $\bi \lnot B \in \Gamma$. Lemma~\ref{lemma:existence}(3) guarantees that there exists an mc-set $\Lambda$ such that $\Gamma {\cal Z} \Lambda$, $\Lambda \in MC_2$ and $\lnot B \in \Lambda$. By IH and Definition~\ref{def:Truth} this implies $\fmod_c,\cmod_1,\Gamma \not\models \nbi B$. 
\par(2) Also $\Gamma \in MC_2$. We want to prove that $\nbi B \in \Gamma$ iff $\fmod_c,\cmod_2,\Gamma\models \nbi B$. In this special case note that since $\Gamma \in MC_2$, we have $\nbi B \in \Gamma$, because by Definition~\ref{def:CanDomains}, $\nbi\perp\in \Gamma$ and $\vdash \nbi \perp \to \nbi B$ (from (TAU), $\perp \to B$, (N$\nbi$), (K$\nbi$) and (MP)). Also, by Definition~\ref{def:Truth}, $\nbi B$ is true in all mc-sets in $MC_2$. Therefore, $\nbi B \in \Gamma$ iff $\fmod_c,\cmod_2,\Gamma\models \nbi B$.
\end{proof}

\begin{theorem}[Strong completeness] For every $\Gamma \subseteq \lb$ and every $A\in \lb$, $\Gamma \models A$ implies $\Gamma\vdash A$.
\end{theorem}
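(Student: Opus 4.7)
The plan is to establish completeness via the standard contrapositive strategy: assume $\Gamma \not\vdash A$ and construct a pointed bi-model that satisfies every formula in $\Gamma$ while refuting $A$. All the machinery is already in place: the canonical model $\fmod_c$ from Definition~\ref{CanMod}, the Canonicity Lemma~\ref{lemma:Cannonicity} certifying that $\fmod_c$ is a bi-model, and the Truth Lemma~\ref{truthLemma} linking membership in an mc-set to satisfaction at the corresponding world of $\fmod_c$.

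Suppose $\Gamma \not\vdash A$. Then $\Gamma \cup \{\lnot A\}$ is $\mathsf{L}_B$-consistent by routine propositional reasoning on the deducibility relation (otherwise a finite conjunction of premises from $\Gamma$ would syntactically entail $A$). By the Lindenbaum Lemma~\ref{lindenbaum}, I extend this set to some mc-set $\Gamma^*$. Since $MC_1 = MC$, I have $\Gamma^* \in MC_1$, so $\Gamma^*$ is a world of the left component $\cmod_1$ of $\fmod_c$.

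Now I apply the first clause of the Truth Lemma: for every $C \in \lb$, $C \in \Gamma^*$ iff $\fmod_c, \cmod_1, \Gamma^* \models C$. For each $B \in \Gamma \subseteq \Gamma^*$ this yields $\fmod_c, \cmod_1, \Gamma^* \models B$; and from $\lnot A \in \Gamma^*$ together with maximality we obtain $A \notin \Gamma^*$, hence $\fmod_c, \cmod_1, \Gamma^* \not\models A$. Because $\fmod_c$ is a bi-model by Lemma~\ref{lemma:Cannonicity}, the pointed model $(\fmod_c, \cmod_1, \Gamma^*)$ witnesses $\Gamma \not\models A$, as required.

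The proof is essentially bookkeeping once the preparatory lemmas are in hand; the only mildly delicate point is placing $\Gamma^*$ on the $\cmod_1$-side of $\fmod_c$. This choice is forced and safe because $MC_1 = MC$ accommodates every mc-set without restriction, whereas $MC_2$ only contains those mc-sets $\Delta$ with $\square^n \nbi \perp \in \Delta$ for every $n$, so a generic $\Gamma^*$ may not qualify. By evaluating on the left, Convention~\ref{conve:canonicalSide} and the first case of the Truth Lemma apply directly, and no extra effort is needed to handle the asymmetry between the two sides of the canonical model highlighted in Remark~\ref{remark:sharedWorlds}.
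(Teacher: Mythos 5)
Your proposal is correct and is exactly the standard contrapositive argument that the paper's one-line proof invokes (Lindenbaum extension of $\Gamma\cup\{\lnot A\}$, Truth Lemma at the resulting mc-set placed in $\cmod_1$, Canonicity Lemma to certify the countermodel is a bi-model). The only difference is that you spell out the bookkeeping—including the correct observation that $\Gamma^*$ must be evaluated on the $MC_1$ side—which the paper leaves implicit.
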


\begin{proof}
    The proof is standard by using Lemmas~\ref{lindenbaum} and~\ref{truthLemma}, and properties of $\vdash$ and $\models$ (see, e.g., \cite{blackburn2010modal}).
\end{proof}

\section{Decidability and complexity}\label{sec:decidability}
This section is devoted to showing that our logic is decidable and, moreover, PSPACE-complete via a translation to standard modal logic $K$. \par 
  We extend our vocabulary with a fresh propositional variable $E_b \notin \atm$,
which will intuitively stand for ``there exists a bisimilar
world''.  We define two translation functions, $\tau_1$ and $\tau_2$,
which map formulas from $\mathcal{L}_{\Box[b]}$ to the standard modal
language $\basiclan$ over $\atm \cup \{E_b\}$. 

\par 

The translation $\tau_2 : \mathcal{L}_{\Box[b]} \to \basiclan$
  evaluates formulas strictly in the right-hand model $\cmod_2$, where no outgoing bisimulation relations exist. In this vein, $\tau_1 : \mathcal{L}_{\Box[b]} \to \basiclan$
  evaluates formulas in the left-hand model $\cmod_1$, dynamically
  checking for the existence of a bisimilar world via $E_b$.
\begin{definition}[Translation Functions $\tau_1$ and $\tau_2$]
  
  \begin{align*}
    \tau_2(p) &= p \quad \text{for } p \in \atm \\
    \tau_2(\neg \phi) &= \neg \tau_2(\phi) \\
    \tau_2(\phi \land \psi) &= \tau_2(\phi) \land \tau_2(\psi) \\
    \tau_2(\Box \phi) &= \Box \tau_2(\phi) \\
    \tau_2([b] \phi) &= \top
  \end{align*}
  \begin{align*}
    \tau_1(p) &= p \quad \text{for } p \in \atm \\
    \tau_1(\neg \phi) &= \neg \tau_1(\phi) \\
    \tau_1(\phi \land \psi) &= \tau_1(\phi) \land \tau_1(\psi) \\
    \tau_1(\Box \phi) &= \Box \tau_1(\phi) \\
    \tau_1([b] \phi) &= E_b \to \tau_2(\phi)
  \end{align*}
\end{definition}

Note that $\tau_1(\langle b \rangle \phi) = \lnot(E_b\to\lnot \tau_2(\phi))$ which is equivalent to $ E_b \land \tau_2(\phi) $. It is also inmediate to show that $\tau_2(\langle b \rangle \phi)$ is equivalent to $\bot$. \\

To simulate the structural requirement of the \textsf{FORTH}
condition, we define the global axiom $A_b$:
$$A_b := E_b \to \Box E_b$$


\begin{theorem}[Equisatisfiability]\label{thm:equisat}
  A formula $\phi \in \mathcal{L}_{\Box[b]}$ is satisfiable in a
  bi-model $\fmod  = (\cmod_1, \cmod_2, Z)$ if and only if its
  translation $\tau_1(\phi)$ is satisfiable in a standard Kripke model
  $\cmod = (W, R, V)$ where the global axiom $A_b$ holds in every world. 
\end{theorem}

\begin{proof}
  $(\Rightarrow)$ Let $\fmod = (\cmod_1, \cmod_2, Z)$ be a bi-model such
  that $\fmod, \cmod_1, w \models \phi$ for some world $w \in
  W_1$.  We construct a standard Kripke model $M = (W, R, V)$ as
  follows:
  \begin{itemize}
  \item $W = W_1$
  \item $R = R_1$
  \item $V(p) = V_1(p)$ for all $p \in \atm$
  \item
    $V(E_b) = \{ x \in W_1 \mid \exists u \in W_2 \text{ such that }
    (x, u) \in Z \}$
  \end{itemize}
  First, we show that $M$ globally satisfies $A_b$.  Let $x \in W$ such
  that ${\cal M}, M, x \models E_b$.  By definition, there exists
  $u \in W_2$ such that $(x, u) \in Z$.  Let $y \in W$ such that
  $(x, y) \in R$.  Since $R = R_1$, we have $(x, y) \in R_1$.  By the
  \textsf{FORTH} condition of the bisimulation $Z$, there must exist
  $v \in W_2$ such that $(u, v) \in R_2$ and $(y, v) \in Z$.  Because
  $(y, v) \in Z$, it follows that $y \in V(E_b)$.  Thus,
  $x \models E_b \to \Box E_b$.  It is straightforward to show that
  $M, w \models \tau_1(\phi)$ given our previous results of soundness
  and $\vdash_B A \to [b] A$ for $A\in{\cal L}_\Box$. 

  \par \medskip \noindent

  $(\Leftarrow)$ Let $M = (W, R, V)$ be a Kripke model such that
  $M, w \models \tau_1(\phi)$ and for all $x \in W$,
  $M, x \models E_b \to \Box E_b$.  We construct a bi-model
  $\fmod = (\cmod_1, \cmod_2, Z)$ by extracting $\cmod_2$ as the generated
  submodel of $E_b$-worlds:
  \begin{itemize}
  \item $\cmod_1 = (W, R, V \restriction \atm)$
  \item $W_2 = \{ x \in W \mid M, x \models E_b \}$
  \item $R_2 = R \cap (W_2 \times W_2)$
  \item $V_2 = V \restriction (\atm \times W_2)$
  \item $Z = \{ (x, x) \mid x \in W_2 \}$
  \end{itemize}

  We must prove that $Z$ is a valid bisimulation:
  \begin{enumerate}
  \item \textbf{(atomic harmony):} If $(x, u) \in Z$, then $x = u$.  By
    definition, $V_1$ and $V_2$ agree on $W_2$, so $x$ and $u$ satisfy
    the same propositional variables. 
  \item \textbf{(forth):} Suppose $(x, u) \in Z$ and
    $(x, y) \in R_1$.  By definition of $Z$, $x = u$ and $x \in W_2$,
    meaning $M, x \models E_b$.  By the global axiom $A_b$, since
    $(x, y) \in R$, it follows that $M, y \models E_b$.  Thus,
    $y \in W_2$.  Let $v = y$.  We have $(u, v) = (x, y) \in R$, and
    since both are in $W_2$, $(u, v) \in R_2$.  Finally,
    $(y, v) = (y, y) \in Z$. 
  \item \textbf{(back):} Suppose $(x, u) \in Z$ and
    $(u, v) \in R_2$.  By definition, $x = u$ and both $u, v \in
    W_2$.  Since $R_2 \subseteq R_1$, we trivially have
    $(u, v) \in R_1$.  Let $y = v$.  Then $(x, y) \in R_1$ and
    $(y, v) = (v, v) \in Z$. 
  \end{enumerate}
  Since $Z$ is a valid bisimulation and $\cmod_1, w \models \tau_1(\phi)$,
  a standard structural induction shows that $\fmod$ satisfies
  the original formula $\phi$ at $w \in W_1$. 
\end{proof}


The reduction presented in
Theorem \ref{thm:equisat} effectively maps the satisfiability problem
of $\mathcal{L}_{\Box[b]}$ to the problem of global satisfiability
under a restricted frame condition in the standard modal logic
$K$. 

\begin{corollary}The satisfiability problem for
  $\mathcal{L}_{\Box[b]}$ is decidable and is
  PSPACE-complete. 
\end{corollary}
 
\begin{proof} It is a foundational
  result by Ladner \cite{ladner1977computational} that the satisfiability problem for the
  basic modal logic $K$ is PSPACE-complete.  Generally, evaluating a
  formula subject to an arbitrary global axiom (or equivalently,
  extending the logic with the Universal Modality) increases the
  computational complexity to EXPTIME-complete \cite{spaan1993}.  This
  exponential blow-up occurs because arbitrary global axioms can act
  as ``world generators'' (e.g. , $p \to \Diamond p$), creating infinite
  trees that require complex blocking mechanisms to ensure
  termination. However, our conservative reduction only requires the
  global enforcement of the specific condition
  $A_b := E_b \to \Box E_b$.  Crucially, this axiom contains no
  existential modalities ($\Diamond$).  It merely propagates a
  propositional constraint ($E_b$) to successor worlds that are
  \textit{already required to exist} by the existential modalities
  present in the original formula $\phi$.  Consequently, $A_b$ cannot
  force the generation of new worlds beyond the modal depth of
  $\tau_1(\phi)$. Because the tree-model depth remains strictly bounded
  by the modal depth of the original formula, satisfiability can be
  decided using a standard depth-first search (DFS) tableau algorithm,
  which only needs to keep a single branch in memory at any given
  time, thus operating in polynomial space.  Since
  $\mathcal{L}_{\Box[b]}$ conservatively extends $K$, the problem is
  PSPACE-hard.  Therefore, the satisfiability of
  $\mathcal{L}_{\Box[b]}$ is exactly
  PSPACE-complete. \end{proof}


\section{Conclusion}\label{sec:conclusion}

\paragraph{Recap.} Let us start our closing with a brief recap. This paper is inserted into a long-standing tradition of approaching mathematical notions through the use of modal languages and semantics. More specifically, it advances a milestone in the modal study of notions that stem from the meta-theory of standard modal logic, an enterprise that we have coined \textit{meta-modal logic}. Our focus has been on a fundamental notion in the model and correspondence theory of modal logic: bisimulations. Our main findings can be summarised as follows. First, the main definitional conditions of bisimulations for the basic modal language (\textit{atomic harmony}, \textit{forth}, and \textit{back}) can be defined in a quite simple extension of such a language that uses one single additional modality, $\nbi$, to describe the bisimulation relation among two arbitrary Kripke models. Moreover, this modality is shown to be not only sufficient, but also necessary to capture the three mentioned conditions on bisimulations. Second, the set of all valid formulas in this extended language can be captured in a sound and complete axiomatisation, based on our previous definitional results. Third, the satisfiability problem of the logic can be shown to be PSPACE-complete. Finally, we apply a robust method for the main results: our proofs are encoded and verified in Isabelle/HOL, which, in addition, helped us spot some errors in the original, handwritten proof trials. 

\paragraph{Discussion.}

A remarkable consequence of Theorem \ref{thm:equisat} and its corollary is the perfect balance they strike between expressive power
and computational complexity.  As established in Section \ref{sec:logic},
$\mathcal{L}_{\Box[b]}$ is strictly more expressive than the standard
modal language $\basiclan$.  It possesses the ability to natively enforce and
verify bisimulation structures across disconnected models, a feature
that typically pushes bi-dimensional interacting logics into
undecidability (e.g.,\ by forcing an $\mathbb{N} \times \mathbb{N}$
grid-like structures) or at least into EXPTIME. However, because the
$[b]$ operator in our semantics cannot be infinitely chained to
transition into arbitrary new domains, the logic avoids the grid
trap.  More importantly, our reduction proves that this significant
increase in expressivity comes at absolutely no computational cost.  By
absorbing the bisimulation into the well-behaved framework of a
Horn-clause-like global axiom, the logic remains in
PSPACE. Furthermore, this reduction provides a straightforward, highly
efficient pathway for practical implementation and automated
reasoning.  Instead of developing bespoke calculi to handle the
bi-dimensional interaction of $\Box$ and $[b]$ (which often struggle
with infinite modal depth regressions and pairwise blocking), one can
leverage standard analytic tableaux for $K$.  Implementing a decision
procedure for $\mathcal{L}_{\Box[b]}$ requires only a trivial
modification to a standard PSPACE tableau algorithm: simply carrying
the $E_b$ flag forward to all generated successors.  This completely
bypasses the need for the complex EXPTIME dynamic blocking techniques
usually required for Description Logics with General Concept
Inclusions \cite{baader2001overview}, yielding a terminating, sound,
and complete decision procedure that is highly amenable to
optimization. 

\paragraph{Open challenges.} 
As for future work, the modal logic presented here paves the way for a new investigation of automating reasoning tasks related to bisimulations. For instance, deciding whether a given relation between models is a bisimulation amounts to checking frame validity of (FORTH) and (BACK), and model validity of (HARM) in the associated $\lb$-structure. These problems are clearly decidable for finite frames and models. Hence, one could implement a modal model-checker that benefits from state-of-the-art techniques (e.g, symbolic model checking \cite{burch1992symbolic}) and compare it with tools in the literature {that follow different approaches (e.g., \cite{basu2001local})}. \par 

An additional research perspective consists of the study of extensions of $\lb$ in order to tackle further meta-modal notions. Naturally, the next target would be surjective p-morphisms \cite{blackburn2010modal}, which are known to preserve validity between Kripke frames. A surjective p-morphism is a bisimulation among frames that is moreover a total, surjective function. We know from classical frame correspondence results that partial functionality of $Z$ is definable in $\lb$ (it corresponds to $\bi p \to \nbi p$), but totality and surjectivity need further expressive resources (again, the use of a universal modality \cite{goranko1992using}).

\bibliographystyle{plain}
\bibliography{bisimulation}
 
\end{document}